\theoremstyle{definition}
\newtheorem{theorem}{Theorem}
\newtheorem{observation}{Observation}
\newtheorem{algo}{Algorithm}
\newtheorem{corollary}{Corollary}
\newtheorem{lemma}{Lemma}
\newtheorem{definition}{Definition}
\newtheorem{example}{Example}
\newtheorem*{remark}{Remark}
\newenvironment{numberedtheorem}[1]{%
\begin{theorem}}{\end{theorem}\addtocounter{theorem}{-1}}
\newenvironment{numberedlemma}[1]{%
\begin{lemma}}{\end{lemma}\addtocounter{lemma}{-1}}
\newenvironment{numberedcorollary}[1]{%
\begin{corollary}}{\end{corollary}\addtocounter{corollary}{-1}}
\DeclareMathOperator{\E}{\mathbb{E}}
\DeclareMathOperator{\sw}{\text{SW}}
\DeclareMathOperator{\T}{\mathcal{T}}
\DeclareMathOperator{\opt}{OPT}
\DeclareMathOperator{\PDim}{PDim}
\DeclarePairedDelimiter\floor{\lfloor}{\rfloor}
\renewcommand{\vec}[1]{\mathbf{#1}}
\newcommand{\eps}{\varepsilon}
\newcommand{\Tau}{\mathcal{T}}
\newcounter{modules}
\title{Setting Fair Incentives to Maximize Improvement}
\author{}
\author{Saba Ahmadi\thanks{Toyota Technological Institute at Chicago. Email: \texttt{saba@ttic.edu}. Author was supported by the National Science Foundation grant CCF-1733556, and the Simons Foundation under the Simons Collaboration on the Theory of Algorithmic Fairness.} \and%
Hedyeh Beyhaghi\thanks{Carnegie Mellon University. Email: \texttt{hhedyeh@cmu.edu}. This work was done while the author was a Postdoctoral Researcher at Toyota Technological Institute at Chicago.
} \and%
Avrim Blum\thanks{Toyota Technological Institute at Chicago. Email: \texttt{avrim@ttic.edu}. This work was supported in part by the National Science Foundation under grants CCF-1815011 and CCF-1733556, and the Simons Foundation under the Simons Collaboration on the Theory of Algorithmic Fairness.} \and%
Keziah Naggita\thanks{Toyota Technological Institute at Chicago. Email: \texttt{knaggita@ttic.edu}. This work was supported in part by the National Science Foundation under grant CCF-1815011, and the Simons Foundation under the Simons Collaboration on the Theory of Algorithmic Fairness.}}
\date{}
\begin{document}
\sloppy

\maketitle

\begin{abstract}
We consider the problem of helping agents improve by setting short-term goals. Given a set of target skill levels, we assume each agent will try to improve from their initial skill level to the closest target level within reach (or do nothing if no target level is within reach). We consider two models: the \emph{common} improvement capacity model, where agents have the same limit on how much they can improve, and the \emph{individualized} improvement capacity model, where agents have individualized limits. Our goal is to optimize the target levels for social welfare and fairness objectives, where \emph{social welfare} is defined as the total amount of improvement, and fairness objectives are considered where the agents belong to different underlying populations. We prove algorithmic, learning, and structural results for each model.

A key technical challenge of this problem is the non-monotonicity of social welfare in the set of target levels, i.e., adding a new target level may decrease the total amount of improvement; agents who previously tried hard to reach a distant target now have a closer target to reach and hence improve less. This especially presents a challenge when considering multiple groups because optimizing target levels in isolation for each group and outputting the union may result in arbitrarily low improvement for a group, failing the fairness objective. Considering these properties, we provide algorithms for optimal and near-optimal improvement for both social welfare and fairness objectives. These algorithmic results work for both the common and individualized improvement capacity models. Furthermore, despite the non-monotonicity property and interference of the target levels, we show a placement of target levels exists that is approximately optimal for the social welfare of each group. Unlike the algorithmic results, this structural statement only holds in the common improvement capacity model, and we illustrate counterexamples to this result in the individualized improvement capacity model. Finally, we extend our algorithms to learning settings where we have only sample access to the initial skill levels of agents.

\end{abstract}

\newpage
\section{Introduction}

Consider a vocational school designed to improve participants' skills and help prepare them for the workforce. The participating students have different skill levels that the school has access to by a pre-screening method. In order to accommodate different skill levels, the organizer designs multiple projects at different difficulty levels. Succeeding in completing a project has the effect of causing  students to improve their skills to that project level. The students only get credit for projects above their initial level, and each student is assumed to pick the closest difficulty level above their initial skill that is within reach. If students feel all the projects are out of reach, they get discouraged and do not participate. The designer's goal is to maximize the total improvement both with and without fairness considerations.

Mathematically, we formulate this problem as follows. There are $n$ agents belonging to $g$ distinct groups. Agent $i$ has an initial skill level, $p_i \in \mathbb{Z}_{\geq 0}$, and can increase their skill by at most $\Delta_i$ which is called the ``improvement capacity''. Given a set of target levels $\T \subset \mathbb{Z}_{\geq 0}$, agent $i$ improves to the closest target $\tau \in \T$ such that $\tau > p_i$ and $\tau \leq p_i+\Delta_i$ if such target exists; otherwise it stays at $p_i$. This model also captures scenarios such as designing promotion levels in firms, and more broadly designing incentives for self-improvement to optimize efficiency and fairness.

This problem formulation gives rise to multiple challenges. First, optimizing improvement for a set of agents may conflict with another set. Consider a beginner-level agent (skill level $B$) and an intermediate-level (skill level $I$). Agent $I$ finds any level up to $\tau_I$ within reach. Therefore, we need to design a project at level $\tau_I$ for this agent to improve maximally. On the other hand, $B$ has the capacity to improve until $\tau_B$, where $I < \tau_B < \tau_I$ --- See \Cref{fig:interfere}. Now, consider both target levels $\tau_B$ and $\tau_I$. Since agent $I$ now has a closer target of $\tau_B$, this agent no longer achieves its maximum improvement, {and only reaches skill level $\tau_B$.} 
Secondly, there is non-monotonicity in the placement of target levels, i.e., adding a new target to the current placement may decrease the total amount of improvement. Consider a beginner-level ($B$) and an intermediate-level ($I$) agent and a target, $\tau$, achievable by both agents --- See \Cref{fig:nonmonotone_sub}. Designing a new project at level $\tau'$ between $B$ and $\tau$ decreases the total amount of improvement since one agent (if $B <\tau'\leq I$) or both agents (if $I<\tau'<\tau$) switch from improving to $\tau$ to improving to $\tau'$, which requires less improvement.

\begin{figure}[ht!]
    \centering
    \begin{subfigure}[b]{0.46\textwidth}
        \centering
        \includegraphics[height=1cm]{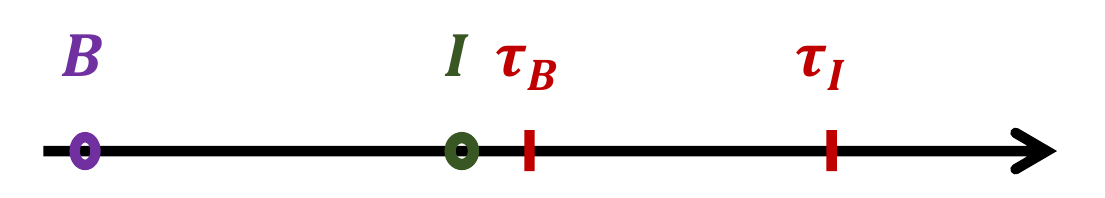}
        \caption{Conflict in optimizing improvement.}
        \label{fig:interfere}
    \end{subfigure}
    \hfill
    \begin{subfigure}[b]{0.46\textwidth}
        \centering
        \includegraphics[height=1cm]{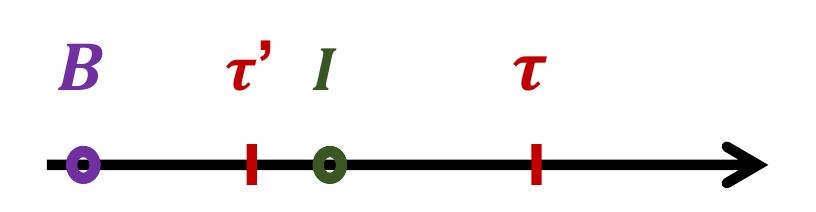}
        \caption{Non-monotonicity in set of target levels.}
        \label{fig:nonmonotone_sub}
    \end{subfigure}
    \caption{Challenges in designing optimal target levels.}
    \label{fig:conflict_and_nonmonotone}
\end{figure}

\textbf{Main Results.} In this work, we consider algorithmic, fairness, and learning-theoretic formulations, where a set of optimal target levels must be found in the presence of effort-bounded agents. {We use \textit{social welfare} as the notion of efficiency and define it as the total amount of improvement. Also, we define \textit{social welfare for a given group} as the amount of improvement that group achieves.} We consider two models: (1) the common improvement capacity model, where agents have the same limit $\Delta$ on how much they can improve, and (2) the individualized improvement capacity model, where agents have individualized limits $\Delta_i$. 

The main results of the paper are:
\begin{enumerate}
    \item An efficient algorithm for placement of target levels to maximize social welfare. (\Cref{sec:max_total_improvement})
    \item An efficient algorithm for outputting the Pareto-optimal outcome for the social welfare of multiple groups. In particular, this can output the max-min fair solution that maximizes the minimum total improvement across groups. (\Cref{sec:max_min}) 
    \item A structural result on Pareto-optimal solutions: there exists a placement of target levels that simultaneously is approximately optimal for each group. {More explicitly,  when there are a constant number of groups, the total improvement for each group is a constant-factor approximation of the maximum  improvement that we could provide that group if it were the only group under consideration. \textbf{This is our main contribution.}} (\Cref{sec:approx_optimality})     
    \item An efficient learning algorithm for near-optimal placement of target levels. (\Cref{sec:gen_guarantees})
\end{enumerate}

The algorithmic results work for both the common and individualized improvement capacity models. However, the structural result only holds in the common improvement capacity model, and we illustrate examples where achieving any nontrivial fraction of optimal for all groups is not possible in the individualized capacity model.


\paragraph*{Related work.}
Our work broadly falls under two general research areas: social welfare maximization in mechanism design and algorithmic fairness. Specifically, the closest topics to our paper are designing portfolios for consumers to minimize loss of returns \cite{Diana-etal}, designing badges to steer users' behavior \cite{steering-user-beha}, and the literature on strategic classification.

Closest to our work is \citet{Diana-etal} who consider a model where each agent has a risk tolerance, observed as a real number, and must be assigned to a portfolio with risk lower than what they can tolerate. The goal of the mechanism designer is to design a small number of portfolios that minimizes the sum of the differences between the agent's risk tolerance and the risk of the portfolio they take; in other words, it minimizes the loss of returns. Since this is a minimization problem where each agent selects the closest target (portfolio) below their risk tolerance, adding any new target can only help with the objective function. Therefore, unlike our model, there is no conflict between targets, and the objective function is monotone in the set of targets.

Designing targets to incentivize agents to take specific actions is also a common feature of online communities and social media sites. In these platforms, there is a mechanism for rewarding user achievements based on a system of \emph{badges} (similar to targets in our model) \cite{steering-user-beha, gamification, babaioff2012bitcoin, burke2009feed, burke2011plugged}. Among such papers, the closest to ours is \citet{steering-user-beha} who investigate how to optimally place badges in order to induce particular user behaviors, among other things. They consider a dynamic setting with a single user type interested in a particular distribution of actions and a mechanism designer whose objective is to set badges to motivate a different distribution of actions. Compared to our work, their model is more general in the sense that users can spend effort on different actions (improve in multiple dimensions), but also more specific, in the sense that there is only one user type; therefore, unlike our model there is no conflict between different users and adding more badges for the desired action always helps with steering the users in that direction (it is a monotone setting).

Another line of work that is relevant is strategic classification. In most cases, agents are fraudulently strategic, that is to say, game the decision-making model to get desired outcomes (see \cite{Hardt2016,revealed_preferences,Hu:2019:,Milli2018TheSC, Ahmadi2021TheSP,adversarial-games-pred,Frankel2019ImprovingIF,braverman2020role} among others). In other cases, in addition to actions only involving gaming the system, agents can also perform actions that truthfully change themselves to become truly qualified (see \cite{Kleinberg2018HowDC, harris2021stateful,Alon2020MultiagentEM,xiao2020optimal,Miller2019StrategicCI,Haghtalab2020MaximizingWW, Bechavod2020CausalFD,Shavit2020LearningFS} among others). In this paper, we assume agents only truthfully change themselves and, therefore, focus on incentivizing agents to improve as much as they can.


\paragraph*{Organization of the Paper.}
{\Cref{sec:model} formally introduces the general model settings and definitions used in the paper, and \Cref{sec:max_total_improvement} provides an efficient algorithm for the problem of maximizing total improvement.
In \Cref{sec:max_min}, we provide algorithms that output Pareto optimal solutions for groups' social welfare, including a solution that maximizes the minimum improvement per group.
In \Cref{sec:approx_optimality}, we provide an algorithm that finds the best simultaneously approximately optimal improvement per group and show it provides a constant approximation when the number of groups is constant.
In \Cref{sec:gen_guarantees}, we provide efficient learning algorithms which generalize the previous results to a setting where there is only sample access to agents, and \Cref{sec:extensions} provides further extensions to our main problems.
All missing proofs are deferred to the appendix.}

\section{Model and Preliminaries}\label{sec:model}

There are $n$ agents $1, \ldots, n$. Agent $i$ is associated with two quantifiers: initial skill level, $p_i$, and \textit{improvement capacity}, $\Delta_i$, which determines the maximum amount agent $i$ can improve its skill. For the majority of the paper, we assume $p_i$ and $\Delta_i$ belong to $\mathbb{Z}_{\geq 0}$; however, some of our results hold more generally for real numbers.\footnote{All our examples that do not use integer numbers can be converted to integer numbers by scaling.}

We consider two different models. The \textit{common} and the \textit{individualized} improvement capacity models. In the first model, all agents have the same improvement capacity, i.e., $\Delta_i$ are equal across agents; we substitute $\Delta_i$ with $\Delta$ in this case. The second model is a generalization where $\Delta_i$ may have different values. We use $\Delta_{max} = \max\{\Delta_1,\cdots,\Delta_n\}$.

Our solution is a finite set of target levels $\T \subset \mathbb{Z}_{\geq 0}$. We assume we are given a maximum number of allowed target levels $k$ (if $k=n$, this is equivalent to allowing an unbounded number of target levels).

\textbf{Agents behavior.} Given target levels $\T \subset \mathbb{Z}_{\geq 0}$, agent $i$ aims for the closest target above its initial skill if it can reach to that target given its improvement capacity. More formally, agent $i$ aims for $\min \{\tau \in \T : p_i<\tau\leq p_i+\Delta_i\}$ if such $\tau$ exists and improves from $p_i$ to $\tau$. If no such target exists, agent $i$ does not improve and its final skill level remains the same as the initial skill level $p_i$. 

We use \textit{social welfare} ($\sw$) as our notion of efficiency and define it as the total amount of improvement of agents. 


\textbf{Groups and fairness notion.} Each agent belongs to one of $g$ distinct groups {$G_1,\cdots, G_g$}. Given any set of target levels, the social welfare of group $\ell$, $\sw_\ell$, is defined as the total amount of improvement for agents in that group.\footnote{Although the results are presented for the \emph{total} improvement objective, they also hold for the \emph{average} improvement objective.}
We are interested in Pareto-optimal solutions for groups' social welfare. A solution $\T$ is Pareto-optimal (is on the Pareto frontier) if there does not exist $\T'$ in which all groups gain at least as much social welfare, and one group gains strictly higher. In particular, the Pareto frontier includes the max-min solution that maximizes the minimum social welfare across groups.  In this paper, we focus on two natural fairness notions: one is the max-min solution described above, and the other is the notion of simultaneous approximate optimality given below.

\begin{definition}[Simultaneous $\alpha$-approximate optimality.] A solution with at most $k$ targets is simultaneously approximately optimal for each group with approximation factor $0 \leq \alpha \leq 1$ if, for each group $\ell$, the social welfare of group $\ell$ is at least an $\alpha$ fraction of the maximum social welfare achievable for group $\ell$ using at most $k$ targets.
\end{definition}

\subsection{Basic Properties of Optimal Target Sets}\label{sec:basics}

This section provides a simple structural result on optimal set of target levels. The following observation determines the potential positions of the targets in an optimal solution.

\begin{observation} \label{obs:potential_targets}
Without loss of optimality, the targets in an optimal solution are either at positions $p_i + \Delta_i$ or $p_i$ for some $i \in \{1, 2, \ldots, n\}$. Consider a solution where target $\tau$ does not satisfy this condition. By shifting $\tau$ to the right as long as it does not cross $p_i + \Delta_i$ or $p_i$ for any $i$, the total amount of improvement weakly increases: This transformation does not change the sets of agents that reach each target, and only increases the improvement of agents aiming for $\tau$.
\end{observation}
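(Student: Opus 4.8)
The plan is to prove this by a local shifting (exchange) argument. Call a position \emph{critical} if it equals $p_i$ or $p_i+\Delta_i$ for some agent $i$; there are finitely many such positions, so it suffices to show that starting from any optimal target set $\T$, any target placed at a non-critical position can be relocated to a critical position without decreasing social welfare, and that iterating this terminates. Concretely, I would pick a target $\tau\in\T$ not sitting at a critical position and slide it to the right, stopping the instant it is about to cross some $p_i$ or $p_i+\Delta_i$ — equivalently, until it lands on the first critical position (or, as discussed below, the first other target) to its right.

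The heart of the argument is that such a rightward slide changes neither the reachability relation between agents and targets nor, absent leapfrogging another target, the agents' choices. An agent $i$ can reach $\tau$ exactly when $p_i<\tau\le p_i+\Delta_i$; as $\tau$ increases, this membership can only flip when $\tau$ passes through $p_i$ or $p_i+\Delta_i$, and by construction the slide crosses none of these. Hence for every agent the set of reachable targets is unchanged, and since $\tau$ does not overtake any other target, the minimum reachable target each agent aims at is also unchanged. I would then write social welfare as $\sw=\sum_i(\tau(i)-p_i)$, where $\tau(i)$ is the target agent $i$ aims at (and the term is $0$ for agents who reach nothing): every agent not aiming at $\tau$ keeps its exact contribution, while each agent aiming at $\tau$ sees its contribution $\tau-p_i$ grow by the size of the shift. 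Thus $\sw$ weakly increases, which is all we need since we began from an optimum.

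The main obstacle I anticipate is the bookkeeping for two degenerate situations. First, while sliding $\tau$ to the right it may meet another target $\tau'$ before reaching a critical position; in that case the two coincide, and I would merge them into a single target at that shared location. Merging leaves every agent's aimed-at target, and hence its improvement, unchanged (agents formerly aiming at $\tau$ now improve weakly more, to the shared position), while it only decreases the number of targets, so the bound of at most $k$ targets is preserved; the merged target still lies at a non-critical position and can continue sliding. Second, I need termination: after each step either one more target sits on a critical position or the total number of targets strictly drops, and since the critical positions are finite this process halts, yielding a solution of the claimed form whose welfare is at least that of the original optimum. The only detail left to verify is that we never need to slide $\tau$ \emph{past} a target strictly interior to a critical-point-free interval, but any such target is itself non-critical and is absorbed cleanly by the merge step above.
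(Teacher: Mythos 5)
Your proposal is correct and follows essentially the same rightward-shifting argument the paper gives inline in the observation: reachability only changes when a target crosses some $p_i$ or $p_i+\Delta_i$, so sliding to the next such point weakly increases the improvement of agents aiming at the shifted target and affects no one else. The extra bookkeeping you add (merging with an encountered target and arguing termination over the finitely many critical positions) is a harmless refinement of the same idea rather than a different route.
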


Observation \ref{obs:potential_targets} motivates the following definition. 

\begin{definition}[$\T_p$]
The set of potential optimal target levels, $\T_p$, is defined as $\bigcup_{i=1}^n\{p_i, p_i+\Delta_i\}$.
\end{definition}

\section{Maximizing Total Improvement }\label{sec:max_total_improvement}

In this section, we provide an efficient dynamic programming algorithm for finding a set of $k$ target levels that maximizes total improvement 
for a collection of $n$ agents. \Cref{def:recurrence-dp-one-group} 
{provides the details} of the dynamic programming algorithm. We bound its time-complexity in~\Cref{thm:total-improvement}.

In~\Cref{def:recurrence-dp-one-group}, the recursion function $T(\tau,\kappa)$ 
finds the best set of at most $\kappa$ target levels 
{for} agents on or to the right of $\tau$. 
Recall that any target $\tau$ only affects the agents on its left, and agent $i$ such that $p_i < \tau$ never selects $\tau' > \tau$ in presence of $\tau$. Utilizing these properties, the 
main idea for the recursive step (item $3$ in \Cref{def:recurrence-dp-one-group}) is to first consider the potential leftmost targets $\tau' > \tau$ and use the smaller subproblem of finding the optimal targets for agents on or to the right of $\tau'$ with one less available target level; i.e., $T(\tau', \kappa-1)$.  
To optimize over the potential leftmost target levels, $\tau'$, we first evaluate the performance of each potential target by improvement of agents who reach it; i.e., $i$ such that $\tau \leq p_i < \tau'$ and $\tau'-p_i\leq \Delta_i$, where agent $i$ improves by $\tau'-p_i$. Next, we add the performance of each potential leftmost target to the optimal improvement of the remaining subproblem and pick the leftmost target that maximizes this summation.

\begin{algo}
\label{def:recurrence-dp-one-group}

Run dynamic program based on function $T$, defined below, that takes $\cup_i \{p_i\}$ and $k$ as input and outputs $T(\tau_{\min},k)$, as the optimal improvement, and $S(\tau_{\min},k)$, as the optimal set of targets; where $\tau_{\min}= \min\{\tau \in \T_p\}$ and $\tau_{\max}= \max\{\tau \in \T_p\}$. $T(\tau,\kappa)$ captures the maximum improvement possible for agents on or to the right of $\tau\in \T_p$ when at most $\kappa$ target levels can be selected. Function $T$ is defined as follows.

\begin{itemize}

\item[1)]  For any $\tau\in \T_p$, 
$ \; T(\tau,0)=0$.

\item[2)]  For any $1\leq \kappa\leq k$, $ \; T(\tau_{\max},\kappa)=0$.

\item[3)]  For any $\tau\in \T_p, \tau< \tau_{\max}$ and $1\leq \kappa\leq k$:

\[T(\tau,\kappa) = \max_{\tau'\in \Tau_p \ \text{s.t} \ \tau' > \tau}\Bigg(T(\tau',\kappa-1) \ + \sum_{\tau\leq p_i <\tau'\text{ s.t. } \tau'-p_i \leq\Delta_i}(\tau'-p_i)\Bigg)\]
\end{itemize}

$S(\tau,\kappa)$ keeps track of the optimal set of targets corresponding to $T(\tau,\kappa)$.
\end{algo}

{The following theorem proves} the correctness of the dynamic programming algorithm and bounds its time-complexity.
\begin{theorem}
\Cref{def:recurrence-dp-one-group} finds a set of targets that achieves the optimal social welfare (maximum total improvement)
that is feasible using at most $k$ targets given $n$ agents. The algorithm runs in $\mathcal{O}(n^3)$.
\label{thm:total-improvement}
\end{theorem}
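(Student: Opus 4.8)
The plan is to prove correctness and the runtime bound separately. For correctness, I would argue that the recursion $T(\tau,\kappa)$ genuinely computes the maximum improvement achievable for the sub-population of agents with $p_i \geq \tau$ using at most $\kappa$ targets, all of which we may assume lie weakly to the right of $\tau$. The base cases are immediate: with zero targets (item 1) no agent improves, and once $\tau = \tau_{\max}$ (item 2) there are no agents strictly to the right who could aim for any further target, so the achievable improvement is zero. The heart of the argument is the recursive step. I would fix an optimal placement of at most $\kappa$ targets for the agents on or to the right of $\tau$ and let $\tau'$ be its leftmost target. By the agent-behavior rule, every agent $i$ with $\tau \leq p_i < \tau'$ and $\tau' - p_i \leq \Delta_i$ aims for $\tau'$ (it is the closest target above them that is reachable), contributing exactly $\tau' - p_i$; agents with $p_i \geq \tau'$ are completely unaffected by $\tau'$ and, crucially, by any target to the left of $\tau'$, since a target never pulls an agent whose skill already exceeds it. This decomposition lets me split the objective into the improvement collected at $\tau'$ plus the improvement of the remaining agents using the remaining $\kappa - 1$ targets, which is precisely an instance of the same subproblem $T(\tau', \kappa-1)$. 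Taking the max over the choice of leftmost target $\tau'$ therefore recovers the optimum, establishing the recurrence.

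Two supporting points need care in the correctness argument. First, I must justify restricting the candidate targets to $\Tau_p$, which follows directly from \Cref{obs:potential_targets}: any optimal solution can be shifted so that every target sits at some $p_i$ or $p_i + \Delta_i$ without decreasing total improvement, so the dynamic program loses nothing by only ever placing targets in $\Tau_p$. Second, I need the clean separation claim — that agents to the right of the leftmost target are truly decoupled from the left part of the solution — which is exactly the observation, restated in the algorithm's preamble, that a target only affects agents to its left and that an agent with $p_i < \tau'$ never selects a target beyond $\tau'$ when $\tau'$ is present. These two facts are what make the greedy-from-the-left decomposition valid despite the non-monotonicity phenomenon emphasized in the introduction; the subtlety is that non-monotonicity makes adding targets potentially harmful, but the recursion sidesteps this by optimizing over the count $\kappa$ and the leftmost position jointly rather than adding targets incrementally.

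For the runtime, I would count the table and the per-entry work. The table $T(\tau,\kappa)$ has one dimension indexed by $\tau \in \Tau_p$, of which there are $O(n)$ values, and a second indexed by $\kappa \in \{0,1,\dots,k\}$, with $k \leq n$; hence $O(n^2)$ entries. Computing a single entry via item 3 ranges over the $O(n)$ candidate targets $\tau' \in \Tau_p$, and for each such $\tau'$ one must evaluate the sum $\sum (\tau' - p_i)$ over the qualifying agents, which naively is another $O(n)$ factor and would give $O(n^4)$. The main obstacle — and the only place the bound is not entirely mechanical — is shaving this to $O(n^3)$. I expect this comes from precomputing the interval sums: for each ordered pair $(\tau,\tau')$ of potential targets one can precompute $\sum_{\tau \leq p_i < \tau',\, \tau'-p_i \leq \Delta_i}(\tau'-p_i)$ in $O(n^2)$ total time (there are $O(n^2)$ such pairs, and with sorted $p_i$ the contributing agents and their contribution can be aggregated incrementally), after which each table entry costs $O(n)$ to maximize over $\tau'$ using the precomputed values. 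This yields $O(n^2) \cdot O(n) = O(n^3)$ for filling the table, dominating the $O(n^2)$ precomputation and matching the claimed bound; the set $S(\tau,\kappa)$ is maintained alongside $T$ by recording the argmaximizing $\tau'$ at each step with no asymptotic overhead.
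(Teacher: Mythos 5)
Your proposal is correct and follows essentially the same route as the paper's proof: induction on the subproblems ordered by $(\tau,\kappa)$, with the key step being that conditioning on the leftmost target $\tau'$ cleanly decouples the agents in $[\tau,\tau')$ from the subproblem $T(\tau',\kappa-1)$, plus the same table-size-times-per-entry-work accounting with precomputed interval sums. The only (immaterial) difference is that you observe the precomputation can be done in $\mathcal{O}(n^2)$ by an incremental sweep, whereas the paper simply charges $\mathcal{O}(n^3)$ for it; both yield the claimed $\mathcal{O}(n^3)$ overall.
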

\begin{proof}
{See \Cref{app:missing_max_tot}.}
\end{proof}

\section{Pareto Optimality and Maximizing Minimum Improvement}\label{sec:max_min}

In this section, we provide a dynamic programming algorithm that constructs the Pareto frontier for groups' social welfare. By iterating through all Pareto-optimal solutions, we can find the solution that maximizes minimum improvement across all groups in pseudo-polynomial time. Next, we provide a Fully Polynomial Time Approximation Scheme (FPTAS) for this objective. 

In~\Cref{recurrence-exact-fairness-objective}, we provide a dynamic program that constructs the Pareto frontier for groups' social welfare. 
In contrast to~\Cref{def:recurrence-dp-one-group} where the algorithm only needs to store an optimal solution for each subproblem, here for each subproblem the algorithm stores a set containing \emph{all} $g$-tuples of groups' improvements $(I_1, I_2,\cdots, I_g)$ that are simultaneously achievable for groups $\{G_1,\cdots, G_g\}$. Similar to the recurrence in~\Cref{def:recurrence-dp-one-group}, we consider the potential left-most targets $\tau'$ and subproblems for agents on or to the right of $\tau'$ with one less available target level; i.e., $T(\tau', \kappa-1)$. Particularly, in item $3$ of \Cref{recurrence-exact-fairness-objective}, we consider all combinations of potential left-most targets $\tau'$ and their corresponding subproblems. To evaluate the performance, for any potential leftmost target $\tau’ \in \T_p$ and $\tau’ > \tau$, we compute the improvement of all agents 
reaching to $\tau’$ from each group separately, i.e., $i \in G_\ell$ such that $\tau \leq p_i < \tau'$ and $\tau'-p_i\leq \Delta_i$, and measure their improvement to reach $\tau’$, i.e., $\tau’ - p_i$. Then, we add this tuple to any tuples $(I_{\ell})_{\ell=1}^{g}\in T(\tau',\kappa-1)$, and store all the dominating resulted tuples (the Pareto frontier) in $T(\tau,\kappa)$. 
\begin{algo}
\label{recurrence-exact-fairness-objective}
Run dynamic program based on function $T$, defined below, that takes $\forall \ell \; \cup_{i \in G_\ell} \{p_i\}$ and $k$ as input and outputs $T(\tau_{\min},k)$, as the Pareto-frontier improvement tuples, and $S(\tau_{\min},k)$, as the Pareto-frontier
sets of targets
; where $\tau_{\min}= \min\{\tau \in \T_p\}$ and $\tau_{\max}= \max\{\tau \in \T_p\}$. $T(\tau,\kappa)$ constructs the Pareto frontier for groups' social welfare for agents on or to the right of $\tau\in \T_p$ when at most $\kappa$ target levels can be selected. Function $T$ is defined as follows.
\begin{itemize}
\item[1)]  For any $\tau\in \T_p$, $ \; T(\tau,0)=\vec{0}_g$.

\item[2)]  For any $1\leq \kappa\leq k$, $ \; T(\tau_{\max},\kappa)=\vec{0}_g$.

\item[3)]  For any $\tau\in \T_p, \tau< \tau_{\max}$ and $1\leq \kappa\leq k$:
\[T(\tau,\kappa) = \Bigg\{\Bigg(I_{\ell}+\Big(\sum_{\substack{\tau\leq p_i <\tau'\\\text{ s.t. } \tau'-p_i \leq\Delta_i}} \mathbbm{1}\Big\{i\in G_{\ell}\Big\}(\tau'-p_i)\Big)\Bigg)_{\ell=1}^{g},  \text{ s.t. } (I_{\ell})_{\ell=1}^{g}\in T(\tau',\kappa-1), \tau'\in \mathcal{T}_p, \tau'> \tau \Bigg\}\]
\end{itemize}
$S(\tau,\kappa)$ stores the sets of targets corresponding to the improvement tuples in $T(\tau,\kappa)$. After the above computations, the algorithm removes all the dominated solutions. 
\end{algo}

When all $p_i, \Delta_i$ values are integral, the running time of~\Cref{recurrence-exact-fairness-objective} gets bounded as follows.

\begin{theorem}
\label{prop:running-time-fairness-exact}
\Cref{recurrence-exact-fairness-objective} constructs the Pareto frontier for groups' social welfare using at most $k$ targets given $n$ agents in $g$ groups, and has a running time of $\mathcal{O}(n^{g+2}kg\Delta_{\max}^g)$, where $\Delta_{\max}$ is the maximum improvement capacity.
\end{theorem}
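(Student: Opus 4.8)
The plan is to establish two things separately: the \emph{correctness} of \Cref{recurrence-exact-fairness-objective}, namely that $T(\tau_{\min},k)$ equals the Pareto frontier of all simultaneously achievable improvement tuples, and the claimed \emph{running time}. The correctness argument will parallel the single-group proof underlying \Cref{thm:total-improvement}, upgraded from tracking a single optimal scalar to tracking an entire set of non-dominated tuples; the running-time argument is a counting argument that crucially uses the integrality of the $p_i$ and $\Delta_i$.

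For correctness, I would prove by induction — backward on $\tau \in \T_p$ and on $\kappa$ — the invariant that $T(\tau,\kappa)$ is exactly the Pareto frontier of the set of improvement tuples $(I_1,\dots,I_g)$ achievable for the agents with $p_i \ge \tau$ using at most $\kappa$ targets lying strictly to the right of $\tau$. The base cases are immediate: with $\kappa=0$ no agent improves, and at $\tau_{\max}=\max \T_p$ no target can be placed strictly to the right, so no improvement is possible, matching $\vec{0}_g$. For the inductive step I would reuse the structural fact behind \Cref{def:recurrence-dp-one-group}: fixing the \emph{leftmost} selected target at $\tau'$ determines precisely which agents improve to $\tau'$ (those with $\tau \le p_i < \tau'$ and $\tau'-p_i \le \Delta_i$), and their behavior is independent of the targets placed at or beyond $\tau'$, which are governed by the subproblem $T(\tau',\kappa-1)$. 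Hence the achievable set for $(\tau,\kappa)$ is the union over $\tau'$ of the achievable set for $(\tau',\kappa-1)$ shifted by the fixed per-group contribution vector at $\tau'$. The only ingredient beyond the scalar case is the observation that taking the Pareto frontier commutes with these two operations: translating every tuple of a set by a common vector preserves the dominance order (so the frontier of the shifted set is the shift of the frontier), and the frontier of a union equals the frontier of the union of the frontiers. Together these imply that storing only frontiers in the subproblems loses no Pareto-optimal tuple, so refiltering the combined candidates yields exactly the frontier for $(\tau,\kappa)$.

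For the running time, the key quantity to bound is $|T(\tau,\kappa)|$. Since every $p_i,\Delta_i$ is an integer, each per-group improvement $I_\ell$ is a non-negative integer, and because at most $n$ agents each contribute at most $\Delta_{\max}$ to their group, we have $I_\ell \in \{0,1,\dots,n\Delta_{\max}\}$. Thus every achievable tuple — and in particular every frontier tuple — lies in a grid of at most $(n\Delta_{\max}+1)^g = \mathcal{O}(n^g \Delta_{\max}^g)$ points, bounding $|T(\tau,\kappa)|$. There are $|\T_p| \le 2n$ choices of $\tau$ and $k+1$ of $\kappa$, giving $\mathcal{O}(nk)$ subproblems. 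Computing $T(\tau,\kappa)$ ranges over $\mathcal{O}(n)$ leftmost targets $\tau'$; for each, it adds a precomputed length-$g$ contribution vector to each of the $\mathcal{O}(n^g \Delta_{\max}^g)$ tuples of $T(\tau',\kappa-1)$ in $\mathcal{O}(g)$ time per tuple. This gives $\mathcal{O}(n \cdot n^g \Delta_{\max}^g \cdot g)$ per subproblem, and multiplying by the $\mathcal{O}(nk)$ subproblems yields $\mathcal{O}(n^{g+2}kg\Delta_{\max}^g)$. I would also note that the per-pair contribution vectors can be precomputed in $\mathcal{O}(n^3)$ total time, dominated by the main term, and that the final dominance-filtering can be carried out over the same $\mathcal{O}(n^g\Delta_{\max}^g)$ grid rather than by all-pairs comparison, so it does not raise the order.

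The main obstacle I anticipate is pinning down the size bound $|T(\tau,\kappa)| = \mathcal{O}(n^g\Delta_{\max}^g)$ cleanly and ensuring the bookkeeping — the per-$\tau'$ contribution precomputation and the dominance filtering after merging the $\mathcal{O}(n)$ candidate sets — is implemented so that neither step exceeds the stated bound. Everything else is a faithful, if notation-heavy, lifting of the scalar dynamic program to the set-valued setting, where the one genuinely new claim to verify carefully is that Pareto frontiers compose correctly under union and common translation.
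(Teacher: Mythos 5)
Your proposal is correct and follows essentially the same route as the paper's proof: induction over subproblems ordered by the leftmost-target decomposition, plus the integrality-grid bound $|T(\tau,\kappa)| \le (n\Delta_{\max}+1)^g$ combined with $\mathcal{O}(nk)$ subproblems, $\mathcal{O}(n)$ choices of $\tau'$, and $\mathcal{O}(g)$ per tuple. The one point you make explicit that the paper leaves implicit --- that taking Pareto frontiers commutes with common translation and with union, so storing only frontiers in subproblems loses no non-dominated tuple --- is a welcome clarification rather than a different argument.
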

\begin{proof}
See \Cref{app:missing_max_min}.
\end{proof}

\begin{corollary}
\label{cor:exact-fairness}
There is an efficient algorithm that finds a set of at most $k$ targets that maximizes minimum improvement across all groups, i.e., maximizing $\min_{1 \leq \ell \leq g} \sw_\ell$.
\end{corollary}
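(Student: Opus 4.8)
The plan is to derive \Cref{cor:exact-fairness} as a direct consequence of \Cref{prop:running-time-fairness-exact}. The key observation is that the max-min fair solution --- the one maximizing $\min_{1 \leq \ell \leq g} \sw_\ell$ --- is necessarily a Pareto-optimal solution for the groups' social welfare vector. Indeed, if some solution $\T$ achieves the optimal max-min value but is dominated by another solution $\T'$, then $\T'$ gives every group at least as much improvement and some group strictly more; since $\T'$ weakly increases every coordinate, the minimum coordinate under $\T'$ is at least that under $\T$, so $\T'$ also attains (or exceeds) the optimal max-min value while being Pareto-optimal. Hence the max-min objective is attained somewhere on the Pareto frontier.

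First, I would invoke \Cref{prop:running-time-fairness-exact}, which guarantees that \Cref{recurrence-exact-fairness-objective} constructs the entire Pareto frontier --- i.e., the set of all non-dominated improvement tuples $(I_1, \ldots, I_g)$ achievable with at most $k$ targets --- together with the corresponding target sets stored in $S(\tau_{\min}, k)$, in time $\mathcal{O}(n^{g+2} k g \Delta_{\max}^g)$. Second, I would iterate over every tuple $(I_1, \ldots, I_g)$ in the returned frontier, compute its minimum coordinate $\min_{1 \leq \ell \leq g} I_\ell$, and return the target set achieving the largest such minimum. By the argument above, the optimal max-min value is realized by at least one frontier tuple, so this selection step outputs a genuinely max-min-optimal target set.

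The remaining point is the running time of the selection step. The number of tuples on the frontier is at most the total number of distinct improvement tuples, which is bounded by the number of achievable values per coordinate raised to the $g$-th power; since each group's total improvement lies in $\{0, 1, \ldots, n \Delta_{\max}\}$, this is polynomial for constant $g$, and in any case is dominated by the construction time already accounted for in \Cref{prop:running-time-fairness-exact}. Scanning the frontier and taking a minimum over $g$ coordinates per tuple therefore adds only a lower-order factor, so the overall algorithm remains efficient (pseudo-polynomial, and polynomial for constant $g$), establishing the corollary.

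The main obstacle --- really the only subtlety --- is justifying that restricting attention to the Pareto frontier loses nothing for the max-min objective; everything else is bookkeeping on top of \Cref{prop:running-time-fairness-exact}. I would make sure the dominance direction is stated carefully (weakly larger in every coordinate preserves the minimum), and I would note that the FPTAS promised in the surrounding text can be plugged in identically if one only wants a $(1-\eps)$-approximate max-min value, trading the $\Delta_{\max}^g$ dependence for a polynomial one.
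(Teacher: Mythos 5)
Your proposal is correct and follows essentially the same route as the paper: run \Cref{recurrence-exact-fairness-objective} to build the Pareto frontier, then scan the at most $(n\Delta_{\max}+1)^g$ frontier tuples and return the one maximizing the minimum coordinate, with the scan cost dominated by the construction time. The only difference is that you spell out explicitly why the max-min optimum must lie on the frontier (domination weakly increases every coordinate, hence the minimum), which the paper leaves implicit.
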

\begin{proof}
See \Cref{app:missing_max_min}.
\end{proof}

\paragraph{A Fully Polynomial Time Approximation Scheme for the Max-Min Objective.}

The algorithm mentioned in~\Cref{cor:exact-fairness} is pseudo-polytime since its time-complexity depends on the numeric value of $\Delta_{\max}$. We present a Fully Polynomial Time Approximation Scheme (FPTAS) to maximize the minimum improvement across all groups for the setting where each group $G_{\ell}$ has its own improvement capacity $\Delta_{\ell}$. The algorithm finds a set of at most $k$ targets that approximates the max-min objective within a factor of $1-\eps$ for any arbitrary value of $\eps>0$. Here, we relax the assumption that $p_i, \Delta_i$ values need to be integral, and suppose all $p_i, \Delta_i$ values are real numbers.
Similar to the dynamic program based on~\Cref{recurrence-exact-fairness-objective}, for each subproblem, a set containing all $g$-tuples of improvements $(I_1, I_2,\cdots, I_g)$ that are simultaneously achievable for all groups is stored. However, computing all such tuples takes exponential time since $\sum_{i=1}^k \binom{2n}{i}$ possible cases of targets' placements need to be considered. Therefore, we discretize the set of all possible improvements for this problem by rounding all the improvement tuples, and develop an FPTAS algorithm. The recurrence for the dynamic program is given in~\Cref{sec:appendix-FPTAS}. The algorithm runs efficiently when the number of groups is a constant. We defer the technical details to~\Cref{sec:appendix-FPTAS}.

\section{Simultaneous Approximate Optimality}\label{sec:approx_optimality}
In this section, we establish a structural result about the Pareto optimal solutions, and show there exists a simultaneously approximately optimal solution on the Pareto frontier, where the approximation factor depends on the number of groups. More specifically, given $g$ groups, and limit $k \geq g$ on the number of target levels, we provide \Cref{alg:approx} whose improvement per group is simultaneously an $\Omega(1/g^3)$ approximation of the optimal $k$-target solution for each group; implying a constant approximation when the number of groups is constant. This result is of significance because natural outcomes such as the max-min fair solution and the union of group-optimal targets may lead to arbitrarily poor performance in terms of simultaneous approximate optimality --- See \Cref{ex:max_min_suboptimal,ex:interference}. This result only holds for the common improvement capacity model, and in \Cref{ex:approx_two_groups}, we show such a solution does not exist for the individualized improvement capacity model.

\begin{theorem} \label{thm:approx}
\Cref{alg:approx}, given limit $k \geq g$ on the number of target levels, outputs a solution that is simultaneously $\Omega(1/g^3)$-approximately optimal for each group. More specifically, it provides a solution such that for all $1\leq \ell \leq g$, $\sw_\ell \geq 1/(16g^3) \opt_{\ell}^{k}$, where $\opt_{\ell}^{k}$ is the optimal social welfare of group $\ell$ using at most $k$ target levels.
\end{theorem}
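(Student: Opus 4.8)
The plan is to prove the theorem in two stages: a structural existence argument showing that some placement of at most $k$ targets is simultaneously $\frac{1}{16g^3}$-optimal for every group, followed by the observation that \Cref{alg:approx} (which searches the Pareto frontier produced by the dynamic program of \Cref{recurrence-exact-fairness-objective}) returns a solution that weakly dominates it on every coordinate. Since the algorithmic half is immediate once existence is established, I would concentrate on constructing the good placement.

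The first building block is a budget-splitting lemma. Fix a group $\ell$ and take its optimal $k$-target solution, whose targets (by \Cref{obs:potential_targets}) contribute improvements $I_1 \ge I_2 \ge \dots \ge I_k$ with $\sum_j I_j = \opt_\ell^k$. I would keep only the $\lfloor k/g\rfloor$ highest-contributing targets, forming a set $T_\ell$. Two facts make this clean: (i) deleting the other targets never reroutes an agent that was assigned to a \emph{retained} target, since that agent's target was already its closest reachable one and no deleted target lies strictly between the agent and it; and (ii) the top $\lfloor k/g\rfloor$ of $k$ sorted values sum to at least a $1/g$ fraction of the total (up to floor effects absorbed into the constant). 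Hence $T_\ell$, placed in isolation, already secures group $\ell$ at least $\tfrac{1}{g}\opt_\ell^k$ while using only a $1/g$ share of the budget, so the union $\bigcup_\ell T_\ell$ respects the cap of $k$ targets.

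The crux --- and the step I expect to be the main obstacle --- is controlling interference when these $g$ per-group sets are placed together. As the catastrophic example behind \Cref{ex:interference} shows, a single foreign target sitting just above an agent can shrink that agent's improvement from $\Delta$ to nearly $0$, so the naive union can lose an unbounded factor and one cannot argue coordinate-wise that it preserves each $\tfrac{1}{g}\opt_\ell^k$. This is exactly where the common-capacity assumption must enter: because every agent reaches targets only within a window of the \emph{same} width $\Delta$, I would first replace each $T_\ell$ by a grid-aligned set on a grid of spacing $\Delta$. An averaging argument over the grid offset shows a single offset exists that loses only a constant factor for group $\ell$, since the expected improvement to the next grid point is $\Delta/2 \ge d_i/2$ where $d_i \le \Delta$ is the agent's optimal improvement. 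The payoff of spacing $\Delta$ is structural: any width-$\Delta$ ``danger window'' above an agent contains at most one point of each group's grid, so an agent faces at most $g$ competing targets. Bounding how much the closest of these $g$ competitors erodes each group's \emph{summed} improvement is what contributes the remaining factors of $g$, and assembling the constants from the offset averaging and the grid snapping yields the stated $\tfrac{1}{16g^3}$. The delicate point is forcing this interference bound to hold \emph{simultaneously} for all groups rather than merely in expectation, which I would attempt either by a union bound over candidate offset choices or by charging each group's interference loss against the improvement that the competing targets themselves generate.

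Finally, I would confirm that the common-capacity assumption is essential by pointing to \Cref{ex:approx_two_groups}: with individualized $\Delta_i$ the reachability windows have different widths, no uniform grid aligns them, and the interference step provably breaks, consistent with the impossibility of any nontrivial simultaneous approximation in that model. With the existence of a $\tfrac{1}{16g^3}$-simultaneous solution in hand, \Cref{alg:approx} outputs a Pareto-optimal point dominating it, which completes the proof.
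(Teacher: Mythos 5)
Your overall architecture (split the target budget $1/g$ per group, then resolve cross-group interference, with the common capacity $\Delta$ doing the heavy lifting) matches the paper's, and your budget-splitting lemma is sound --- it is essentially the subadditivity step the paper uses to pass from $\opt_\ell^{k}$ to $\opt_\ell^{\lceil k/g\rceil}$, costing one factor of $g$. But the part you defer as ``the delicate point'' is the actual content of the theorem, and neither of your proposed fixes closes it. First, the grid-averaging step is not well-defined within the budget: a full grid of spacing $\Delta$ over the agents' range can require far more than $k$ targets, while snapping the $\lfloor k/g\rfloor$ retained targets to a random-offset grid does not give expected improvement $\Delta/2\geq d_i/2$ per agent --- an agent improving by $d_i$ close to $\Delta$ falls out of reach of an upward-snapped target with probability close to $1$, and a downward snap yields only $d_i^2/(2\Delta)$ in expectation, which is not a constant fraction of $d_i$. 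Second, even granting a per-group expectation bound under random offsets, converting it into a \emph{simultaneous} guarantee is exactly where the argument breaks: a group's improvement is bounded by $n_\ell\Delta$ while its expectation is only $\Theta(n_\ell\Delta/g)$, so it is only guaranteed to beat a constant fraction of its expectation with probability $\Omega(1/g)$ rather than $1-o(1/g)$, and a union bound over the $g$ groups fails; charging group $\ell$'s interference loss against the improvement generated by the competing targets credits a \emph{different} group and does nothing for $\sw_\ell$.

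The ingredient you are missing is the paper's structural use of \emph{local optimality}. After sparsifying each group's targets to be $2\Delta$ apart and re-optimizing each target $\tau$ for the agents in $[\tau-\Delta,\tau)$ (steps 2--3 of \Cref{alg:approx}), \Cref{lm:p_x} shows that an $x$ fraction of those agents lie in the leftmost $x\Delta$ of the window, and \Cref{lm:shift} then shows the window retains an $x^2/4$ fraction of its improvement under \emph{any} single target placed in $[\tau-\Delta+x\Delta,\tau]$, provided no target lands in the sliver $(\tau-\Delta,\tau-\Delta+\Delta/g)$. Interference is then resolved \emph{deterministically} rather than by averaging (step 4): the left endpoints $\tau-\Delta$ are clustered into maximal runs of consecutive points less than $\Delta/g$ apart (each run has at most $g$ points by the within-group separation), and one representative target per run is placed so that, by \Cref{lm:step_four}, every window satisfies both conditions; taking $x=1/g$ gives $1/(4g^2)$ per window, hence $1/(16g^2)$ against $\opt_\ell^{\lceil k/g\rceil}$ and $1/(16g^3)$ against $\opt_\ell^{k}$. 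Finally, note that \Cref{alg:approx} is this four-step construction, not a search of the Pareto frontier of \Cref{recurrence-exact-fairness-objective} (that is \Cref{cor:approx}), so an existence argument alone would not establish the stated claim about \Cref{alg:approx}'s output.
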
 

\begin{corollary} \label{cor:approx}
There is an efficient algorithm to find a simultaneously $\alpha^*$-approximately optimal solution for each group, where $\alpha^*$, defined as the best approximation factor possible, is $\Omega(1/g^3)$. 
\end{corollary}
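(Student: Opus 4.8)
The plan is to treat the two assertions separately: the bound $\alpha^* = \Omega(1/g^3)$ and the existence of an efficient algorithm attaining $\alpha^*$. The lower bound is immediate from \Cref{thm:approx}: since \Cref{alg:approx} already outputs a single solution that is simultaneously $\Omega(1/g^3)$-approximately optimal for every group, the best achievable factor satisfies $\alpha^* \ge 1/(16g^3)$, so no separate argument is required for this part.

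For the algorithmic claim, I would first observe that computing the best simultaneous approximation is exactly a rescaled max-min problem. Run \Cref{def:recurrence-dp-one-group} once on each group in isolation to obtain the constants $\opt_\ell^k$ for $1 \le \ell \le g$. By definition, a target set $\T$ is simultaneously $\alpha$-approximately optimal iff $\sw_\ell \ge \alpha\,\opt_\ell^k$ for all $\ell$, so
\[
\alpha^* = \max_{\T}\ \min_{1 \le \ell \le g}\ \frac{\sw_\ell}{\opt_\ell^k},
\]
which is the max-min objective of \Cref{sec:max_min} after weighting each group $\ell$ by the positive constant $1/\opt_\ell^k$.

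The key step is to reduce this rescaled objective to the Pareto-frontier machinery. I would build the full Pareto frontier of achievable improvement tuples $(I_1,\dots,I_g)$ using \Cref{recurrence-exact-fairness-objective}, then scan it, evaluating $\min_\ell I_\ell/\opt_\ell^k$ at each Pareto-optimal tuple and returning the maximizer along with its stored target set. Restricting to the frontier is without loss: coordinate-wise positive rescaling preserves dominance, so a dominated tuple has weakly smaller value in every coordinate and hence a weakly smaller normalized minimum, meaning the optimum of the rescaled objective is always attained at a Pareto-optimal point. The running time is governed by the frontier construction, giving $\mathcal{O}(n^{g+2}kg\Delta_{\max}^g)$ via \Cref{prop:running-time-fairness-exact} plus $g$ calls to the $\mathcal{O}(n^3)$ single-group algorithm, which is polynomial whenever $g$ is constant.

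The main obstacle I anticipate is not conceptual but one of verification: confirming that scanning the stored frontier recovers the \emph{exact} optimum of the rescaled objective rather than an approximation, and that introducing the weights $1/\opt_\ell^k$ does not disturb the integrality the construction of \Cref{recurrence-exact-fairness-objective} relies on. The latter holds because the constants $\opt_\ell^k$ enter only in the final comparison step, while every dynamic-programming state remains an integer improvement tuple exactly as before.
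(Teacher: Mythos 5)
Your proposal is correct and follows essentially the same route as the paper: construct the Pareto frontier via \Cref{recurrence-exact-fairness-objective}, scan it for the tuple maximizing $\min_\ell \sw_\ell/\opt_\ell^k$ (noting the optimum is attained on the frontier since positive rescaling preserves dominance), and invoke \Cref{thm:approx} to conclude $\alpha^* = \Omega(1/g^3)$. Your write-up is in fact slightly more explicit than the paper's, spelling out how the $\opt_\ell^k$ are computed and why restricting attention to the frontier loses nothing.
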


We are not aware if $\Omega(1/g^3)$ is the best possible ratio, however, the following example shows there are no simultaneously approximately optimal solutions with approximation factor $>1/g$. 

\begin{example}\label{ex:lowerbound}
Let $\Delta = 1$. Suppose group $\ell \in \{1, 2, \ldots, g\}$ has a single agent at position $(\ell-1)/g$; i.e., the agents are at $0, 1/g, \ldots, (g-1)/g$. For each group, the optimal total improvement is $1$ in isolation (independent of the limit on the number of targets). However, using any number of targets in total there are no solutions with $> 1/g$ improvement for all groups. \end{example}

The following example shows that the max-min fair solution does not satisfy a simultaneous constant approximation per group even when there are only two groups.

\begin{example}\label{ex:max_min_suboptimal}
Let $\Delta = 1$. Group $A$ has $n$ agents; one agent at each position $1, 2, \ldots, n$. Group $B$ has $n$ agents in $k$ bundles of size $n/k$. The bundles of agents are at positions $n+1-k^2/n, \ldots, n+k-k^2/n$. The unique max-min solution has targets at $n-k+1, n-k+2, \ldots, n+1$, and leads to $k$ total improvement for each group which is $k/n$ of the optimal total improvement for group $B$.
\end{example}

The following example shows solving the optimization problem separately per group and outputting the union of the targets can lead to arbitrarily low group improvement compared to the optimum.

\begin{example}\label{ex:interference}
Suppose there are two groups $A$ and $B$ and no limit on the number of targets. Group $A$ has $n$ agents at positions $1, 3, 5, \ldots, 2n-1$.  Group $B$ has $n$ agents at positions $2-\eps, 4-\eps, \ldots, 2n-\eps$. {First, consider the common capacity model, where $\Delta = 1$.} In this case, the optimal solution for group $A$ in isolation consists of targets at positions $\{2, 4, \ldots, 2n\}$ and the optimal solution for group $B$ is isolation is $\{3-\eps, 5-\eps, \ldots, 2n+1-\eps\}$. Now, consider a solution that is the union of the targets in the two separate solution. Since each agent in group $B$ is in $\eps$ proximity of a target from group $A$, the total improvement in group $B$ is $n \eps$. Therefore, the total improvement in group $B$ can be arbitrarily close to $0$. {Next, consider the individualized capacity model, where agents in group $A$ have $\Delta_A = 1$, and agents in group $B$ have $\Delta_B = 1+2\eps$. The optimal set of targets in isolation for group $A$ is $\{2, 4, \ldots, 2n\}$, and for group $B$ is $\{3+\eps, 5+\eps, \ldots, 2n+1+\eps\}$. The union of these solutions result in $1+(n-1)\eps$ for group $A$, and $n\eps$ for group $B$ which are arbitrarily low compared to the optimum, which is simultaneously $\geq n(1-\epsilon)$ for group $A$ and $\geq n$ for group $B$.}
\end{example}

The following example shows that if agents can improve by different amounts (the individualized improvement capacity model), then no approximation factor only as a function of $g$ of optimal improvement per group is  possible. 

\begin{example}\label{ex:approx_two_groups}
Suppose groups $A$ and $B$ each have a single agent at position $0$. The agent in group $A$ has improvement capacity $\Delta_A = \eps$ and the agent in group $B$ has improvement capacity $\Delta_B = 1$. The optimal total improvement in isolation for group $A$ is $\epsilon$, and for group $B$ is $1$. However, when considering both groups, no placement of targets with positive improvement for group $A$ leads to $> \eps$ improvement for group $B$.
\end{example}

First, we describe a high-level overview of \Cref{alg:approx}. The algorithm proceeds in the following four main steps. 

\begin{enumerate}
    \item \textbf{Optimal targets in isolation.} Run~\Cref{def:recurrence-dp-one-group} separately for each group to find an optimal allocation of at most $\lceil k/g \rceil$ targets \footnote{Although the total number of targets used in this step can be more than $k$, after the algorithm ends at most $k$ targets are being used in total.}. Let $\T_\ell$ be the output for group $\ell$. 
    \item \textbf{Distant targets in isolation.} Delete $3/4$ fraction of each set of target levels, $\T_\ell$, such that (1) the distance between every two consecutive targets in each set is at least $2\Delta$ and (2) the new $\T_\ell$  (after deletion) guarantees an $\Omega(1)$ approximation of the previous step when the targets for each group are considered in isolation. {\Cref{sec:step2} below shows this is possible.}
    \item \textbf{Locally optimized distant targets in isolation.} For each $\ell$ and $\tau \in \T_\ell$, consider the agents in group $\ell$ that afford to reach $\tau$ (agents in $G_{\ell}$ $\cap[\tau-\Delta, \tau)$). Optimize $\tau$ to maximize the total improvement for this set of agents.
    \item \textbf{Resolve interference of targets.} Consider sets of interfering targets. Relocate these targets locally to guarantee $\Omega(1/g^2)$ approximation per group compared to the previous step where each group was considered in isolation. {\Cref{sec:step4} below shows this is possible.}
\end{enumerate}

\begin{algorithm}[!ht]
    \SetNoFillComment
    \SetAlgoLined
    \DontPrintSemicolon
    \For{ $\ell= 1 \text{ to } g$}{
        \tcc{Step 1}
        Let $\T_\ell: \tau_1 <\tau_2< \ldots$  be the output of~\Cref{def:recurrence-dp-one-group} for agents in $G_{\ell}$ and limit $\lceil k/g \rceil$ on the number of targets.\;

        \tcc{Step 2}
        Partition $\T_\ell$ to $4$ parts $P_1, P_2, P_3, P_4$, where $P_i:= \tau_i, \tau_{4+i}, \tau_{8+i}, \ldots$.\; 
        Update $\T_\ell$ by keeping the part with the highest improvement and deleting the rest.\;
        
        \tcc{Step 3}
        Delete agents in $G_\ell$ that do not improve given $\T_\ell$.\;
        For all $\tau \in \T_\ell$, replace $\tau$ with the output of~\Cref{def:recurrence-dp-one-group} for agents in $[\tau-\Delta, \tau) \cap G_{\ell}$ and limit $1$ on the number of targets.\; 
    }
    \tcc{Step 4}
    $\T: \tau_1 < \tau_2 < \ldots = \cup_\ell \T_\ell$\\
    $S, \T^* = \emptyset$\\
    \For{$\tau_j \in \T$}{
        $s_j = \tau_j - \Delta$\\
        $S = S \cup \{s_j\}$\\
    }
    {Partition $S : s_1 < s_2 < \ldots$ into the least number of parts of consecutive points: $S_1, S_2, \ldots$, such that in each part, $S_i$, each two consecutive points are at distance less than $\Delta/g$.}\;
    \For{all $S_i : s_u < s_{u+1} < \ldots < s_v$}{
        $\tau^*_i = \min\{\tau_u, s_{v+1}\}$.\\
        $\T^* = \T^* \cup \tau^*_i$.\\
    }
    \Return $\T^*$\\
    \caption{Simultaneous approximate optimality per group.}
    \label{alg:approx}
\end{algorithm}

Now, we describe and analyze these steps in more detail. 

\subsection{Step $1$: Optimal targets in isolation}

At the end of step $1$, $\T_\ell$ is the optimal set of targets for $G_\ell$ in isolation. The following observation shows that without loss of optimality, we may assume the distance between every other target level is at least $\Delta$.\footnote{\Cref{ex:consecutive_less_than_delta}, however, shows the distance between two \emph{consecutive} targets may be arbitrarily smaller than $\Delta$.}

\begin{observation}\label{obs:delta_apart}
Consider a set of target levels $\T: \tau_1 < \tau_2 < \ldots$.  Suppose $\tau_{j+2} < \tau_j + \Delta$. By removing $\tau_{j+1}$, any agent with $\tau_j \leq p_i < \tau_{j+1}$ improves strictly more, and other agents improve the same amount. This weakly increases social welfare. 
\end{observation}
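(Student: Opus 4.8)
The plan is to track exactly which agents change their chosen target when $\tau_{j+1}$ is deleted from $\T$, and to show that the only affected agents are those in $[\tau_j, \tau_{j+1})$, all of whom strictly benefit, while everyone else is untouched.

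First I would record the consequence of the hypothesis. Since $\tau_{j+1} < \tau_{j+2} < \tau_j + \Delta$, we also have $\tau_{j+1} < \tau_j + \Delta$. This is what guarantees that every agent $i$ with $\tau_j \le p_i < \tau_{j+1}$ can reach \emph{both} $\tau_{j+1}$ and $\tau_{j+2}$: from $p_i \ge \tau_j$ we get $\tau_{j+1} < \tau_j + \Delta \le p_i + \Delta$, and likewise $\tau_{j+2} < \tau_j + \Delta \le p_i + \Delta$. So both targets are within reach for such agents.

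Next I would argue that deleting $\tau_{j+1}$ only affects agents whose chosen target was precisely $\tau_{j+1}$. Writing each agent's choice as $\min\{\tau \in \T : p_i < \tau \le p_i + \Delta\}$, suppose this minimum equals some $\tau^\ast \ne \tau_{j+1}$. Then $\tau^\ast$ remains in $\T \setminus \{\tau_{j+1}\}$, still satisfies $p_i < \tau^\ast \le p_i + \Delta$, and is still the minimum of the (now smaller) candidate set, because the only deleted element $\tau_{j+1}$ strictly exceeds the minimum $\tau^\ast$. Hence that agent's choice, and therefore its improvement, is unchanged. I would then characterize the agents who did choose $\tau_{j+1}$: since the targets are sorted, $\tau_{j+1}$ is the smallest target exceeding $p_i$ exactly when $\tau_j \le p_i < \tau_{j+1}$, and reaching it additionally requires $\tau_{j+1} \le p_i + \Delta$. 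By the first step this reachability condition is automatic, so these are precisely the agents with $\tau_j \le p_i < \tau_{j+1}$. After deletion the smallest remaining target above such a $p_i$ is $\tau_{j+2}$, which is reachable, so each of these agents improves to $\tau_{j+2}$ instead of $\tau_{j+1}$, a strict gain of $\tau_{j+2} - \tau_{j+1} > 0$. Combining, all other agents keep the same improvement and the agents in $[\tau_j, \tau_{j+1})$ strictly improve, so $\sw$ weakly increases.

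The only subtle point, and the place I would be most careful, is the reachability bookkeeping: the chained inequality $\tau_{j+1} < \tau_j + \Delta$ is exactly what rules out the degenerate scenario of an agent in $[\tau_j, \tau_{j+1})$ that cannot reach $\tau_{j+1}$ and would already be aiming at $\tau_{j+2}$ (hence be unaffected by the deletion). I would state this inequality explicitly so that such a case is vacuous, which is what makes the clean statement ``every agent with $\tau_j \le p_i < \tau_{j+1}$ improves strictly more'' literally correct rather than merely weakly correct.
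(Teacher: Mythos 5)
Your proof is correct and follows essentially the same reasoning the paper embeds directly in the statement of the observation: the hypothesis $\tau_{j+2} < \tau_j + \Delta$ guarantees that every agent in $[\tau_j, \tau_{j+1})$ can reach $\tau_{j+2}$, so deleting $\tau_{j+1}$ reroutes exactly those agents to the farther target while leaving everyone else's choice unchanged. Your extra care in verifying that no agent outside $[\tau_j,\tau_{j+1})$ was aiming at $\tau_{j+1}$, and that reachability of $\tau_{j+1}$ is automatic for agents in that interval, is a faithful elaboration of the paper's argument rather than a different route.
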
 

\subsection{Step $2$: Distant targets in isolation}\label{sec:step2}

Step $2$ of the algorithm runs the following procedure for $\T_\ell$.

\begin{definition}[Distant targets procedure] \label{def:distant_procedure}
Consider solution $\T: \tau_1< \tau_2< \ldots$, where for all $j$,  $\tau_{j+2}-\tau_j \geq \Delta$ as input to the following procedure.
\begin{itemize}
    \item Partition $\T$ into $4$ parts, $P_1, P_2, P_4, P_4$, where $P_i=: \tau_i, \tau_{4+i}, \tau_{8+i}, \ldots$. Consider the part $P_i$ that introduces the highest improvement. Update $\T$ to $P_i$ (and delete the rest).
\end{itemize}
\end{definition}

The following lemma shows that at the end of this step, target levels in $\T_\ell$ are $2\Delta$ apart, this step provides a $4$-approximation compared to the previous step, and the number of targets designated to each group is at most $\lfloor k/g \rfloor$.

\begin{lemma}\label{lm:make_distant}
Consider solution $\T: \tau_1< \tau_2< \ldots$ with total improvement $I$ such that for all $j$,  $\tau_{j+2}-\tau_j \geq \Delta$. Consider the procedure in \Cref{def:distant_procedure}. This procedure results in a solution $\T': \tau'_1< \tau'_2< \ldots$ where $\forall j \; \tau'_{j+1}-\tau'_j \geq 2\Delta$, has total improvement at least $ I/4$, and $|\T'| \leq \lceil |\T|/4 \rceil$. Particularly, for $|\T|\leq \lceil k/g \rceil$ where $k \geq g$, the number of final targets, $|\T'|$, is at most $\lfloor k/g \rfloor$.
\end{lemma}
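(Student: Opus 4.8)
The plan is to verify the four assertions of the lemma separately, with only the improvement bound being nontrivial.

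\emph{Separation of the retained targets.} First I would fix an arbitrary part $P_i = \{\tau_i, \tau_{4+i}, \tau_{8+i}, \ldots\}$ and show its consecutive elements are $2\Delta$ apart. Two consecutive elements of $P_i$ are $\tau_{4m+i}$ and $\tau_{4m+4+i}$, i.e.\ four apart in the original indexing. Applying the input hypothesis $\tau_{j+2}-\tau_j \geq \Delta$ once with $j = 4m+i$ and once with $j = 4m+2+i$ and adding gives $\tau_{4m+4+i} - \tau_{4m+i} \geq 2\Delta$. Since this holds for every part, it holds for the chosen $P_i = \T'$.

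\emph{Improvement bound (the main step).} This is the crux, and the obstacle is precisely the non-monotonicity of social welfare: the improvement of a part in isolation need not equal the sum of the per-target contributions it had inside $\T$. I would resolve this with an accounting argument. In the full solution $\T$ each agent aims at a single target, so $I = \sum_j I(\tau_j)$, where $I(\tau_j)$ is the total improvement of the agents that reach $\tau_j$. The key claim is that every agent reaching some $\tau_j \in P_i$ under $\T$ still reaches $\tau_j$ under $P_i$ alone. Indeed, if $p$ is such an agent's position, then $\tau_j = \min\{\tau \in \T : p < \tau \le p+\Delta\}$, which forces $\tau_{j-1} \le p < \tau_j \le p+\Delta$ (any $\tau_{j-1} > p$ would be a reachable closer target, a contradiction). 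The element of $P_i$ just below $\tau_j$ is $\tau_{j-4} \le \tau_{j-1} \le p$ (if it exists; otherwise $\tau_j$ is the minimum of $P_i$), so no element of $P_i$ lies in $(p,\tau_j)$ while $\tau_j \le p+\Delta$ is still in reach; hence $\tau_j$ remains this agent's target in $P_i$. Therefore the improvement of $P_i$ in isolation is at least $\sum_{\tau_j \in P_i} I(\tau_j)$, and can only be larger since agents that formerly reached deleted targets may now reach a $P_i$-target. Summing over $i = 1,\ldots,4$ recovers $\sum_j I(\tau_j) = I$, so the best of the four parts achieves improvement at least $I/4$.

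\emph{Cardinality.} Partitioning $|\T|$ targets cyclically into four parts gives each part size $\lfloor |\T|/4\rfloor$ or $\lceil |\T|/4\rceil$, hence $|\T'| \le \lceil |\T|/4\rceil$. For the final claim I would write $k = qg + r$ with $0 \le r < g$; the hypothesis $k \ge g$ yields $q \ge 1$ and $\lceil k/g\rceil \le q+1$. Using the identity $\lceil m/4\rceil = \lfloor (m+3)/4\rfloor$ and $|\T| \le \lceil k/g\rceil$, one gets $\lceil \lceil k/g\rceil/4\rceil \le \lceil (q+1)/4\rceil = \lfloor q/4\rfloor + 1 \le q = \lfloor k/g\rfloor$, where the last inequality is exactly where $q \ge 1$ is needed. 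Thus $|\T'| \le \lfloor k/g\rfloor$. The only genuinely delicate point is the improvement bound: it hinges on showing that deleting three of the four parts never diverts an agent away from its retained target, which is where the "closest reachable target above" rule together with the preexisting $\Delta$-separation are used.
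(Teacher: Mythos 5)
Your proof is correct and follows the same route as the paper's: cyclically partition $\T$ into four parts, keep the best, and use the hypothesis $\tau_{j+2}-\tau_j\geq\Delta$ twice to get the $2\Delta$ separation. In fact your write-up is more complete than the paper's one-line justification of the $I/4$ bound: the paper simply asserts that keeping the best part retains a quarter of the improvement, whereas you explicitly verify the crucial (and, given the non-monotonicity emphasized throughout the paper, genuinely non-obvious) accounting claim that every agent reaching a retained target $\tau_j\in P_i$ under $\T$ still reaches $\tau_j$ under $P_i$ alone, and you also supply the $\lceil\lceil k/g\rceil/4\rceil\leq\lfloor k/g\rfloor$ arithmetic that the paper leaves implicit.
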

\begin{proof} See \Cref{app:approximation}.
\end{proof}

\subsection{Step $3$: Locally optimized distant targets in isolation}

At the end of step $2$, every two targets in $\T_\ell$, the set of targets for group $\ell$, are at distance at least $2\Delta$. Consider only the targets and agents in group $\ell$. For each $\tau \in \T_\ell$, agents in $[\tau-\Delta, \tau)$ improve to $\tau$ and the remaining agents do not improve. 
To continue with the algorithm, we first delete the agents that do not improve. Then, we optimize $\T_\ell$ for the set of agents that do improve. This modification is necessary for the next step. To do the optimization, we use~\Cref{def:recurrence-dp-one-group} for agents in $[\tau-\Delta, \tau)$ for any $\tau \in \T_\ell$ and limit $1$ on the number of targets, and replace $\tau$ with the output of the algorithm.

\begin{lemma}\label{lm:step_three}
At the end of step $3$ in \Cref{alg:approx}, (i) the distance between every two targets in $\T_\ell$ is at least $\Delta$; (ii) each target $\tau \in \T_\ell$ is optimal, i.e., maximizes total improvement for agents in $G_\ell \cap [\tau-\Delta, \tau)$; and (iii) the total amount of improvement of $G_\ell$ using solution $\T_\ell$ does not decrease compared to the previous step.
\end{lemma}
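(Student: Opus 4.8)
The plan is to verify the three properties of Lemma~\ref{lm:step_three} by analyzing the effect of the Step~3 local re-optimization on each target independently, exploiting the fact that after Step~2 the targets in $\T_\ell$ are pairwise at distance at least $2\Delta$. The crucial structural fact is that because consecutive targets are $\geq 2\Delta$ apart, the sets of agents that can reach distinct targets are disjoint: an agent with $p_i \in [\tau - \Delta, \tau)$ cannot reach the next target $\tau''$ to the right (since $\tau'' - p_i \geq 2\Delta - \Delta = \Delta \geq \Delta_i$, with equality only at the boundary, which is handled by the aiming rule using a strict lower cutoff), and it is above the previous target. So replacing each $\tau$ by the single-target optimum for the local agent set $G_\ell \cap [\tau - \Delta, \tau)$ changes only those agents' contributions and leaves all other agents' assignments untouched. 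This decoupling is what lets me argue about each target in isolation.

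First I would establish property~(iii): by~\Cref{obs:potential_targets} (or just the definition of the single-target optimization via~\Cref{def:recurrence-dp-one-group} with limit $1$), the new target position maximizes total improvement over exactly the agents $G_\ell \cap [\tau - \Delta, \tau)$, which is the same agent set that was contributing to $\tau$ before (the non-improving agents having been deleted). Since the previous position $\tau$ was a feasible choice for that single-target subproblem, the optimum is at least as good, so each target's contribution weakly increases, and summing over the disjoint agent groups gives that $\sw_\ell$ does not decrease. Property~(ii) is then immediate, since it is exactly the optimality guaranteed by running~\Cref{def:recurrence-dp-one-group} with $\kappa = 1$ on $G_\ell \cap [\tau - \Delta, \tau)$.

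The main obstacle is property~(i): showing that after re-optimization consecutive targets remain at distance at least $\Delta$. The re-optimization can only move each target $\tau$ to a new position within the span of its own local agent set; by~\Cref{obs:potential_targets} the new target sits at some $p_i + \Delta$ or $p_i$ with $p_i \in [\tau - \Delta, \tau)$, hence the new position lies in the half-open interval $[\tau - \Delta, \tau)$ roughly speaking, and in particular each target can only decrease (it moves leftward, never past $\tau - \Delta$, and never right of its old value). I would argue that since the old targets were $\geq 2\Delta$ apart, and each can shift left by strictly less than $\Delta$ while staying within its local window, two neighboring targets that started $\geq 2\Delta$ apart can close the gap by at most $\Delta$ in total (the left one moving left only widens the gap; the concern is the right one moving left toward the left one), leaving a gap of at least $2\Delta - \Delta = \Delta$. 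Making the boundary cases precise—whether the shift is strict, and whether the new target can land exactly at $\tau - \Delta$—is the delicate part, and I would pin it down by tracking that the optimized position is at some agent location $p_i \geq \tau - \Delta$, so the rightmost old target moves to a point $\geq \tau - \Delta$, which is at distance $\geq (\tau_{\text{prev}}) - (\tau - \Delta)$; combined with the $2\Delta$-separation this yields the $\Delta$ bound. I expect this separation-preservation argument to be the only place requiring genuine care; (ii) and (iii) follow directly from the definition of the local optimization and the disjointness of agent sets.
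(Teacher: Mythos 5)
Your proposal contains a genuine error that propagates into parts (i) and (ii): you claim the locally re-optimized target moves \emph{leftward}, landing in $[\tau-\Delta,\tau)$. The opposite is true, and the paper's proof hinges on it. Every agent in $[\tau-\Delta,\tau)$ can already reach $\tau$, so any target $\tau'<\tau$ is strictly worse (agents in $[\tau-\Delta,\tau')$ improve less and agents in $[\tau',\tau)$ improve not at all), while any target beyond $\tau+\Delta$ is unreachable; hence the single-target optimum lies in $[\tau,\tau+\Delta]$, i.e., the target moves \emph{rightward} by at most $\Delta$. Your own invocation of \Cref{obs:potential_targets} already betrays this: a position $p_i+\Delta$ with $p_i\in[\tau-\Delta,\tau)$ lies in $[\tau,\tau+\Delta)$, not in $[\tau-\Delta,\tau)$ as you assert. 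The numerical conclusion for (i) happens to survive ($2\Delta$ separation minus a one-sided shift of at most $\Delta$ still leaves $\Delta$), but the mechanism you describe --- the right target drifting left toward the left one --- is not what happens; it is the left target drifting right toward the right one.

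The rightward shift also exposes a real gap in your treatment of (ii), which you dismiss as ``immediate.'' Statement (ii) asserts optimality of the \emph{new} target $\tau'$ for the agents in $G_\ell\cap[\tau'-\Delta,\tau')$ --- the window anchored at $\tau'$ --- whereas the optimization in Step 3 was performed over the window $[\tau-\Delta,\tau)$ anchored at the \emph{old} target. Since $\tau'\ge\tau$, these windows differ, and $[\tau'-\Delta,\tau')$ could a priori contain agents in $[\tau,\tau')$. The paper closes this gap in two moves: the agents in $[\tau,\tau+\Delta)$ were not improving at the end of Step 2 (by the $2\Delta$ separation) and were therefore deleted, so $G_\ell\cap[\tau'-\Delta,\tau')=G_\ell\cap[\tau'-\Delta,\tau)$; and then an exchange argument shows that a better target $\tau''$ for the subinterval $[\tau'-\Delta,\tau)$ would contradict optimality of $\tau'$ for the full interval $[\tau-\Delta,\tau)$, because agents in $[\tau-\Delta,\tau'-\Delta)$ contribute nothing to $\tau'$ anyway. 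Your argument for (iii) is essentially the paper's and is fine; (i) needs the direction corrected, and (ii) needs the window-mismatch argument supplied.
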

\begin{proof} See \Cref{app:approximation}.
\end{proof}

Now, we extract properties about optimal solutions. Since at the end of step $3$, $\T_\ell$ is optimal for $G_\ell$ we take advantage of these properties in the remaining steps of the algorithm. 

The following lemma shows that if $\tau$ is optimal for agents in $[\tau-\Delta, \tau)$, a considerable fraction of these agents reside in the left-most part of the interval.

\begin{lemma}\label{lm:p_x}
Consider optimal target $\tau$ for the set of agents $A$ in $[\tau-\Delta, \tau)$ in absence of other targets. For each $0\leq x \leq 1$, at least $x$ fraction of $A$ belong to $[\tau-\Delta,\tau-\Delta + x\Delta)$. In particular, at least $1/(2g)$ fraction of the agents are in $[\tau-\Delta,\tau-(2g-1)/(2g) \Delta)$. 
\end{lemma}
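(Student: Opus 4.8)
The plan is to use an exchange argument: since $\tau$ is the optimal single target for $A$, shifting it rightward cannot increase the total improvement, and this one constraint already forces the agents of $A$ to concentrate toward the left endpoint $\tau-\Delta$. First I would translate coordinates so that $\tau-\Delta=0$ and $\tau=\Delta$, writing $y_i := p_i-(\tau-\Delta)\in[0,\Delta)$ for the position of agent $i\in A$ relative to the left endpoint; an agent at relative position $y_i$ improves by $\Delta-y_i$ when aiming for $\tau$. Let $m:=|A|$ and, for $\delta\in[0,\Delta]$, let $L(\delta):=|\{i\in A: y_i<\delta\}|$, so that the claim ``at least $x$ fraction of $A$ lies in $[\tau-\Delta,\tau-\Delta+x\Delta)$'' is exactly $L(x\Delta)\ge xm$.

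The key step is to compare $\tau$ with the shifted target $\tau+\delta$ for $\delta\in(0,\Delta]$. Under this shift every agent with $y_i\ge\delta$ still reaches the target (all agents lie in $[\tau-\Delta,\tau)$, hence below $\tau+\delta$) and improves by an extra $\delta$, while every agent with $y_i<\delta$ can no longer reach the target and forfeits its entire improvement $\Delta-y_i$. Optimality of $\tau$ then yields, for every $\delta\in(0,\Delta]$,
\[ \delta\,(m-L(\delta)) \;\le\; \sum_{i\in A:\; y_i<\delta}(\Delta-y_i). \]
This single inequality drives everything: the left-hand side is the gain from pushing the target right, and the right-hand side is the improvement sacrificed by the agents that fall out of reach. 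I would then rewrite the right-hand side via the identity $\sum_{i:\,y_i<\delta}(\Delta-y_i)=\int_0^\delta L(s)\,ds+(\Delta-\delta)L(\delta)$ (Fubini applied to $\Delta-y_i=\int_0^\Delta\mathbbm{1}\{s>y_i\}\,ds$), which converts the constraint into
\[ \int_0^\delta L(s)\,ds + \Delta\,L(\delta) \;\ge\; \delta\,m, \]
valid for all $\delta$. Instantiating at $\delta=x\Delta$ is meant to produce the lower bound on $L(x\Delta)$, and the specialization $x=1/(2g)$ gives the ``in particular'' statement, since $[\tau-\Delta,\tau-\tfrac{2g-1}{2g}\Delta)$ is precisely the leftmost $\tfrac{1}{2g}$-fraction of the interval.

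The hard part will be the last step, extracting a clean linear-in-$x$ fraction from the optimality inequality. The naive bound $\int_0^\delta L(s)\,ds\le \delta\,L(\delta)$ gives only $L(\delta)\ge \tfrac{\delta}{\Delta+\delta}\,m$, which is of the correct order $\Omega(x)\,m$ (in particular an $\Omega(1/g)$ fraction, already enough for the downstream $\Omega(1/g^3)$ guarantee) but is strictly weaker than the stated $\tfrac{\delta}{\Delta}\,m=xm$; the gap is genuine, since a right-concentrated configuration with a heavy atom at $\tau-\Delta$ anchoring the target can make $L(x\Delta)$ strictly below $xm$ when ties at shifted targets are allowed. I would therefore sharpen the extraction by exploiting the boundary condition $L(\Delta)=m$ (every agent of $A$ reaches $\tau$) together with the fact that, by \Cref{obs:potential_targets}, $\tau$ may be taken as the \emph{rightmost} optimal target, so that no rightward shift even ties; this is what rules out the anchored right-concentrated configurations and pins the extremal (uniform) distribution for which the bound holds with equality. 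The crux, and the place I expect to spend the most care, is precisely combining the whole family of inequalities over $\delta\in(0,\Delta]$ with this tie-break to upgrade $\tfrac{\delta}{\Delta+\delta}$ to the stated $\tfrac{\delta}{\Delta}$, rather than reading off a single value of $\delta$.
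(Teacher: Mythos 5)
Your exchange argument is exactly the one the paper uses, and your displayed inequality $\delta\,(m-L(\delta))\le\sum_{i:\,y_i<\delta}(\Delta-y_i)$ is the correct consequence of comparing $\tau$ with $\tau+\delta$. Your suspicion about the final step is also correct, but the situation is worse than you suggest: the gap cannot be closed, because the lemma as stated is false and your ``naive'' bound $L(\delta)\ge\frac{\delta}{\Delta+\delta}\,m$ is essentially tight. In particular, your proposed repair via tie-breaking (taking $\tau$ to be the rightmost optimal target so that no rightward shift even ties) does not work. Take $\Delta=2$, two agents at $p=0$ and three agents at $p=1$, and $\tau=2$, so $A=[\tau-\Delta,\tau)\cap\{p_i\}$ contains all five agents. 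The target $2$ gives improvement $2\cdot 2+3\cdot 1=7$, while every other target gives at most $6$ (the best alternatives being $3$, reached only by the three agents at $1$, and points just below $2$); hence $\tau=2$ is the \emph{unique} optimal target, yet only $2/5<1/2$ of $A$ lies in $[\tau-\Delta,\tau-\Delta/2)=[0,1)$, contradicting the claim at $x=1/2$. Scaling this construction (a $\frac{x}{1+x}$-fraction atom at $\tau-\Delta$ and the rest at $\tau-\Delta+x\Delta$) violates $p_x\ge x$ for every $x\in(0,1)$, including $x=1/(2g)$. For comparison, the paper's own proof reaches $p_x\ge x$ by an invalid manipulation: it lower-bounds the loss of the shift by $p_x|A|(1-x)\Delta$ and then substitutes that \emph{lower} bound for the loss on the left-hand side of ``loss $\ge$ gain,'' where an \emph{upper} bound would be required; the legitimate upper bound $p_x|A|\Delta$ yields exactly your $p_x\ge x/(1+x)$.

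The constructive conclusion, which you already half-drew, is to stop after your first step: restate the lemma with the fraction $\frac{x}{1+x}$ (or simply $x/2$, valid for $x\le 1$) and take your derivation of $L(\delta)\ge\frac{\delta}{\Delta+\delta}m$ as the complete proof. This loses nothing downstream: plugging $p_{x/2}\ge\frac{x/2}{1+x/2}\ge x/3$ into the proof of \Cref{lm:shift} gives a relocation guarantee of $x^2/6$ in place of $x^2/4$, so \Cref{lm:shift}, \Cref{thm:modified}, and \Cref{thm:approx} all survive with slightly worse constants and the asymptotic guarantees $\Omega(1/g^2)$ and $\Omega(1/g^3)$ are unaffected. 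Spending effort on upgrading $\frac{\delta}{\Delta+\delta}$ to $\frac{\delta}{\Delta}$, as both you and the paper attempt, is effort spent proving a false statement.
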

\begin{proof}
Let $p_x$ be the fraction of agents in $A$ in $[\tau-\Delta, \tau-\Delta+x\Delta)$. Each of these agents is improving by at least $(1-x) \Delta$. Therefore, the contribution of these agents to total improvement of $A$ is at least $p_x|A| (1-x)\Delta$. Since $\tau$ is the optimal target, it introduces at least as much improvement as any other target, and in particular a target at $\tau' = \tau+x$. Consider the total improvement introduced by $\tau'$ compared to $\tau$ (in absence of target $\tau$). The contribution of the agents in $[\tau-\Delta, \tau-\Delta+x\Delta)$ to total improvement reduces to $0$, but the contribution of the agents in $[\tau-\Delta+x\Delta, \tau)$ increases by $(1-p_x)|A|x\Delta$. Since $\tau$ is the optimal target, the loss of substituting it with $\tau'$ is at least as much as the gain. Therefore, $p_x (1-x)\Delta \geq (1-p_x)x\Delta$; which implies $p_x \geq x$. 
\end{proof}

The following lemma shows that if $\tau$ is optimal for agents in $[\tau-\Delta, \tau)$, substituting $\tau$ with another target in this interval, far enough from the left endpoint, $\tau-\Delta$, guarantees a considerable fraction of the optimal improvement.

\begin{lemma}\label{lm:shift}
Consider optimal target $\tau$ for agents $A$ in $[\tau-\Delta, \tau)$ in absence of other targets. By relocating $\tau$ to any point in $[\tau-\Delta+x\Delta, \tau]$, for $0 \leq x \leq 1$, the total improvement of $A$ is at least $x^2/4$ of the optimum. In particular, by relocating $\tau$ to any point in $[\tau-\Delta+\Delta/g, \tau]$, the total improvement is at least $1/(4g^2)$ of the optimum.
\end{lemma}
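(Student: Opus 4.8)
The plan is to reduce everything to two facts: reachability is automatic on this interval, and the improvement function is monotone in the target's position, so only the left endpoint of the relocation window needs to be checked. Since every agent of $A$ lies in $[\tau-\Delta,\tau)$, any agent $i$ has $p_i+\Delta \ge \tau$, so any relocated target $\tau'' \le \tau$ is within reach of \emph{every} agent of $A$. Hence placing a single target at $\tau''$ yields improvement $I(\tau'') = \sum_{i:\, p_i < \tau''}(\tau'' - p_i)$, with no reachability constraint to track. This sum is non-decreasing in $\tau''$ (raising the target weakly increases each surviving term and can only add new ones), so the minimum of $I$ over the window $[\tau-\Delta+x\Delta,\tau]$ is attained at the left endpoint $\tau'' = \tau-\Delta+x\Delta$. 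It therefore suffices to lower bound $I(\tau-\Delta+x\Delta)$, and the worst case is handled for the whole interval at once.

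For this lower bound I would invoke \Cref{lm:p_x} with parameter $x/2$ (this is the only place the optimality of $\tau$ enters): at least an $x/2$ fraction of $A$ lies in $[\tau-\Delta,\ \tau-\Delta+\tfrac{x}{2}\Delta)$. Every such agent is strictly below the relocated target $\tau-\Delta+x\Delta$ and so improves by at least $(\tau-\Delta+x\Delta) - (\tau-\Delta+\tfrac{x}{2}\Delta) = \tfrac{x}{2}\Delta$. Counting only these agents gives $I(\tau-\Delta+x\Delta) \ge \tfrac{x}{2}|A| \cdot \tfrac{x}{2}\Delta = \tfrac{x^2}{4}|A|\Delta$.

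To finish I would bound the optimum from above crudely: with the target at $\tau$, each agent improves by $\tau - p_i \le \Delta$ (since $p_i \ge \tau-\Delta$), so $\opt = \sum_i(\tau-p_i) \le |A|\Delta$. Combining, $I(\tau-\Delta+x\Delta) \ge \tfrac{x^2}{4}|A|\Delta \ge \tfrac{x^2}{4}\opt$, which is exactly the claim; the special case follows by setting $x = 1/g$, giving $\tfrac{1}{4g^2}$.

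Each individual step is short, so the only genuine subtlety is covering the entire relocation interval rather than a single point, and the monotonicity observation is what makes this clean: without it one might fear that some interior point of $[\tau-\Delta+x\Delta,\tau]$ gives less improvement, but pushing the target rightward never hurts here precisely because reachability is free. The second point I would state carefully is the deliberate choice to apply \Cref{lm:p_x} at $x/2$ rather than $x$: using the full $x$ would count agents arbitrarily close to the target, whose improvement could be nearly zero, so one keeps a $\tfrac{x}{2}\Delta$ margin, trading a factor of $2$ in the agent count for a guaranteed $\tfrac{x}{2}\Delta$ improvement per agent, which is what lands the clean $x^2/4$ factor.
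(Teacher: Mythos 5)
Your proof is correct and follows essentially the same route as the paper's: apply \Cref{lm:p_x} with parameter $x/2$ to get an $x/2$ fraction of agents each improving by at least $(x/2)\Delta$ after relocation, and compare against the trivial upper bound $\opt \leq |A|\Delta$. The preliminary monotonicity reduction to the left endpoint is harmless but redundant, since the per-agent bound of $(x/2)\Delta$ already holds uniformly for every relocation point in $[\tau-\Delta+x\Delta,\tau]$.
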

\begin{proof}
Similar to the previous lemma, let $p_{x/2}$ be the fraction of agents in $[\tau-\Delta,\tau-\Delta+(x/2)\Delta)$. After the relocation, each such agent improves by at least $(x/2)\Delta$; therefore, the contribution of these agents to total improvement is at least $p_{x/2}|A|(x/2)\Delta$. The optimal total improvement is bounded by $|A|\Delta$. Therefore, using $p_{x/2} \geq x/2$, by \Cref{lm:p_x}, the total improvement after relocation is at least $x^2/4$ of the optimum.
\end{proof}

\subsection{Step $4$: Resolve interference of targets}\label{sec:step4}

In this step, we consider the solutions for all groups together and resolve the interference of targets designed for different groups. As illustrated in \Cref{ex:interference}, this interference can lead to arbitrarily low social welfare. To resolve this issue, we take advantage of sparsity of the targets designed for the same group (step $2$) and optimality of $\T_\ell$ for $G_\ell$ (step $3$). 

The main purpose of this step is to recover an approximation guarantee of the total improvement of \emph{each target in isolation} at the end of step $3$ by removing the interference among the targets. Particularly, for each target $\tau \in \T_\ell$ in isolation, we consider agents in $G_\ell$ reaching to that, i.e., agents in interval $[\tau-\Delta, \tau)$. By \Cref{lm:p_x}, a considerable fraction of these agents are on the left-most side of the interval. And as shown in \Cref{lm:shift}, as long as there exists a target far enough from the left endpoint we are in good shape. More precisely, if for all $\tau$ at the beginning of this step, there is a target in the final solution in $[\tau-\Delta+\Delta/g, \tau]$ (property $1$), and no targets in $(\tau-\Delta, \tau-\Delta+\Delta/g)$ (property $2$), a $1/(4g^2)$ fraction is achievable. The set of targets at the end of step $3$ may fail to satisfy these properties, because there may be targets $\tau' < \tau$ such that $\tau'$ is not far enough from the left endpoint of the interval corresponding to $\tau$; i.e., for $s=\tau-\Delta, \; s < \tau' < s + \Delta/g$. 

To resolve the interference among the targets, in step $4$, we work as follows. First, we consider the left endpoints of improvement intervals corresponding to the targets; i.e., $\forall \tau_j$, at the end of step $3$, consider $s_j=\tau_j-\Delta$. Then, we partition these left endpoints into maximal parts $S_1, S_2, \ldots$, such that in each part, the distance between every two consecutive points is small, particularly, less than $\Delta/g$. Using the sparsity of the targets (step $2$) the number of points in each part is bounded. Finally, we design a new target $\tau^*_i$ (defined formally below) corresponding to part $S_i$, such that $\tau^*_i$ is to the left of any $S_j$ with $j > i$, and at distance between $\Delta/g$ and $\Delta$ to the right of the points in $S_i$ (satisfying properties $1$ and $2$). Using optimality of $\T_\ell$ for $G_\ell$ (step $3$) this results in the desired approximation factor.

More formally, this step proceeds as follows.

\begin{enumerate}
    \item Let $\T: \tau_1 < \tau_2 < \ldots$ be the union of the set of targets found at the end of step $3$.
    \item Construct $S: s_1 < s_2 < \ldots$ from $\T$, such that $\forall \tau_j \in \T$, include $s_j = \tau_j - \Delta$ in $S$.
    \item Partition $S$ into the least number of parts of consecutive points: $S_1, S_2, \ldots$, such that in each part $S_i : s_u < s_{u+1} < \ldots < s_v$, each two consecutive points are at distance less than $\Delta/g$; i.e., $\forall s_r,s_{r+1} \in S_i, s_{r+1}-s_r < \Delta/g $. By construction of the first three steps (and as shown in the proof of \Cref{lm:step_four}), the number of points in each part is at most $g$.
    \item For each $S_i : s_u < s_{u+1} < \ldots < s_v$, consider new target $\tau^*_i = \min\{\tau_u, s_{v+1}\}$. 
    \item Output the set of new targets.
\end{enumerate}

\begin{lemma}\label{lm:step_four}
Consider $\T$ as the union of all solutions at the end of step $3$. For all $\tau \in \T$, consider the interval $[\tau-\Delta, \tau)$ which consists of agents that improve to target $\tau$ if it were the only target available. At the end of step $4$, (i) there will be a target in $[\tau-\Delta+\Delta/g, \tau]$, and (ii) there will be no targets in $(\tau-\Delta, \tau-\Delta+\Delta/g)$.
\end{lemma}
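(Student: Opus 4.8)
The plan is to fix an arbitrary target $\tau=\tau_j\in\T$ produced at the end of step $3$, write $s_j=\tau_j-\Delta$ for the left endpoint of its improvement interval, and let $S_i:\,s_u<s_{u+1}<\cdots<s_v$ be the part of the partition of $S$ that contains $s_j$, so $u\le j\le v$. The new target built for this part is $\tau^*_i=\min\{\tau_u,\,s_{v+1}\}=\min\{s_u+\Delta,\,s_{v+1}\}$, where $s_{v+1}$ denotes the first left endpoint of the following part (read as $+\infty$ when $S_i$ is the last part). My strategy is to prove that $\tau^*_i$ itself witnesses (i) by landing in $[\,s_j+\Delta/g,\ s_j+\Delta\,]=[\,\tau_j-\Delta+\Delta/g,\ \tau_j\,]$, and then to prove (ii) by ruling out \emph{any} new target inside the open interval $(s_j,\ s_j+\Delta/g)$.

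The first step I would carry out is the counting claim flagged in the construction: each part holds at most $g$ points. By \Cref{lm:step_three}(i) two targets of the same group differ by at least $\Delta$, and the shifted left endpoints $s=\tau-\Delta$ inherit this spacing. If some part held $g+1$ points, then among any $g+1$ consecutive ones the $g$ separating gaps are each below $\Delta/g$, so the extremes differ by less than $\Delta$; yet $g+1$ points drawn from $g$ groups must repeat a group by pigeonhole, and such a repeated pair is at least $\Delta$ apart --- a contradiction. Hence every part has at most $g$ points, and in particular $s_v-s_u<(g-1)\,\Delta/g=\Delta-\Delta/g$.

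With the span bound in hand I would record the key estimates on the new targets. Since $s_v-s_u<\Delta-\Delta/g$ we get $s_u+\Delta>s_v+\Delta/g$, and because the gap that terminated the part satisfies $s_{v+1}-s_v\ge\Delta/g$ we also have $s_{v+1}\ge s_v+\Delta/g$; taking the minimum gives $\tau^*_i\ge s_v+\Delta/g\ge s_j+\Delta/g$. Conversely $\tau^*_i\le s_u+\Delta\le s_j+\Delta$ because $s_u\le s_j$. This places $\tau^*_i\in[\,s_j+\Delta/g,\ s_j+\Delta\,]$ and settles (i). The same computation applied to an arbitrary part $S_{i'}:\,s_{u'}<\cdots<s_{v'}$ yields the two facts I need for (ii): every new target obeys $\tau^*_{i'}\ge s_{v'}+\Delta/g$ and $\tau^*_{i'}\le s_{v'+1}$, the latter simply because $\tau^*_{i'}$ is a minimum taken over $s_{v'+1}$.

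Finally, for (ii) I would split the new targets according to the position of their part relative to $S_i$. A part entirely to the left of $S_i$ has its successor starting no later than $s_u$, so $\tau^*_{i'}\le s_{v'+1}\le s_u\le s_j$, outside $(s_j,\ s_j+\Delta/g)$. The target $\tau^*_i$ of $S_i$ itself satisfies $\tau^*_i\ge s_j+\Delta/g$ from the previous paragraph, so it is not strictly inside either. A part to the right of $S_i$ starts at a point at least $s_{v+1}\ge s_v+\Delta/g$, so monotonicity of the left endpoints together with $\tau^*_{i'}\ge s_{v'}+\Delta/g$ forces $\tau^*_{i'}\ge s_v+2\Delta/g>s_j+\Delta/g$. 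Hence no new target lies in $(s_j,\ s_j+\Delta/g)$, which is exactly (ii). I expect the main obstacle to be this last interference check: a left-hand part can individually reach as far right as $s_{u'}+\Delta$, well into the forbidden zone, and it is precisely the cap $\min\{\cdot,\ s_{v'+1}\}$ in the definition of $\tau^*_{i'}$ together with the per-part span bound that prevents any such intrusion, so the argument must lean on both simultaneously.
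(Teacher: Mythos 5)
Your proposal is correct and follows essentially the same route as the paper's proof: bound $|S_i|\le g$ by pigeonhole using the $\Delta$-separation of same-group targets from \Cref{lm:step_three}, deduce $s_v-s_u<\Delta-\Delta/g$, and conclude $\tau^*_i\in[s_v+\Delta/g,\ s_u+\Delta]$ together with $\tau^*_i\le s_{v+1}$. The only (welcome) difference is that you spell out the cross-part interference check for (ii) — left parts capped by $s_{v'+1}\le s_u$, right parts pushed past $s_v+2\Delta/g$ — more explicitly than the paper does.
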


\begin{proof} See \Cref{app:approximation}.
\end{proof}

\subsection{Putting everything together}

\begin{theorem} \label{thm:modified}
\Cref{alg:approx}, given $k \geq g$, provides a solution with at most $k$ number of targets, such that for all $1\leq \ell \leq g$, $\sw_\ell \geq 1/(16g^2) \opt_{\ell}^{\lceil k/g \rceil}$, where $\opt_{\ell}^{k}$ is the optimal social welfare of group $\ell$ using at most $k$ target levels.
\end{theorem}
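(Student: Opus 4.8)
The plan is to combine the per-step guarantees established in Lemmas~\ref{lm:make_distant}, \ref{lm:step_three}, \ref{lm:p_x}, \ref{lm:shift}, and \ref{lm:step_four} into a single multiplicative chain of approximation factors, tracking the social welfare of an arbitrary fixed group $\ell$ from the isolated optimum down to the final output. First I would fix a group $\ell$ and establish the baseline: Step~1 runs \Cref{def:recurrence-dp-one-group} with limit $\lceil k/g \rceil$, so at the start $\T_\ell$ achieves exactly $\opt_\ell^{\lceil k/g \rceil}$ in isolation. Then I would chain the losses: Step~2, via \Cref{lm:make_distant}, keeps the best of four parts and hence retains at least a $1/4$ fraction of the isolated welfare while producing targets that are $2\Delta$ apart and numbering at most $\lfloor k/g \rfloor$; Step~3, via \Cref{lm:step_three}(iii), does not decrease the isolated welfare. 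So after Step~3 the isolated welfare of $\T_\ell$ is at least $\tfrac14 \opt_\ell^{\lceil k/g \rceil}$.

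The crux is to pass from the isolated welfare after Step~3 to the welfare of group $\ell$ in the final merged solution $\T^*$ after Step~4. Here I would argue target-by-target. Each target $\tau \in \T_\ell$ after Step~3 is optimal for the agents $A_\tau = G_\ell \cap [\tau-\Delta, \tau)$ by \Cref{lm:step_three}(ii), and these intervals are disjoint because consecutive targets are at least $\Delta$ apart by \Cref{lm:step_three}(i); thus the isolated welfare of $\T_\ell$ decomposes as a sum of the per-target optima over the $\tau \in \T_\ell$. For each such $\tau$, \Cref{lm:step_four} guarantees that in $\T^*$ there is a target inside $[\tau-\Delta+\Delta/g, \tau]$ and none in $(\tau-\Delta, \tau-\Delta+\Delta/g)$. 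The absence of any target in the leftmost sliver means that every agent of $A_\tau$ lying in $[\tau-\Delta, \tau-\Delta+\Delta/g)$ still reaches some target at or beyond $\tau-\Delta+\Delta/g$, while the presence of a target in $[\tau-\Delta+\Delta/g,\tau]$ guarantees such a reachable target actually exists within each agent's capacity $\Delta$. Invoking \Cref{lm:shift} with $x = 1/g$ then shows the contribution of $A_\tau$ to the welfare of group $\ell$ under $\T^*$ is at least $1/(4g^2)$ of the per-target optimum for $\tau$. Summing over the disjoint intervals gives $\sw_\ell \geq \tfrac{1}{4g^2} \cdot \tfrac14 \opt_\ell^{\lceil k/g \rceil} = \tfrac{1}{16g^2}\opt_\ell^{\lceil k/g \rceil}$.

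I would expect the main obstacle to be making the decomposition-and-reassembly argument fully rigorous, namely verifying that \Cref{lm:shift} can legitimately be applied independently to each interval $[\tau-\Delta,\tau)$ even though in $\T^*$ the targets are shared across groups and across different $\tau$'s. The subtlety is that an agent in $A_\tau$ could in principle be captured by a target created for a \emph{different} original target, so I must confirm that properties~(i) and~(ii) of \Cref{lm:step_four} really do isolate the behavior of $A_\tau$: property~(ii) rules out a nearer target stealing agents from the left sliver (which would reduce their improvement below the \Cref{lm:shift} bound), and the $2\Delta$-separation from Step~2 combined with property~(i) ensures the single relevant target for $A_\tau$ behaves as the relocated $\tau$ in \Cref{lm:shift}. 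I would also check the counting claim that each part $S_i$ contains at most $g$ points, since this underlies \Cref{lm:step_four} and implicitly the target budget; this follows from the $2\Delta$-separation within each group (so each group contributes at most one point to a window of width less than $\Delta$) together with there being $g$ groups. Finally, I would confirm the total number of output targets is at most $k$: each group contributes at most $\lfloor k/g \rfloor$ targets after Step~2, and Step~4 only merges points into at most as many new targets as there are parts, never increasing the count, so $|\T^*| \le g \lfloor k/g \rfloor \le k$.
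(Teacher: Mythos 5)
Your proposal is correct and follows essentially the same route as the paper's proof: chain the factor-$4$ loss from \Cref{lm:make_distant}, the no-loss guarantee of \Cref{lm:step_three}, and the factor-$4g^2$ loss from \Cref{lm:shift} together with \Cref{lm:step_four}, then bound the target count by $g\lfloor k/g\rfloor \le k$. The paper states this chaining in two sentences and leaves the per-target decomposition and the reassembly across the merged target set $\T^*$ implicit; your write-up correctly identifies and fills in exactly those details (disjointness of the intervals $[\tau-\Delta,\tau)$, and that properties (i) and (ii) of \Cref{lm:step_four} let \Cref{lm:shift} with $x=1/g$ be applied interval by interval).
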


\begin{proof}
By \Cref{obs:delta_apart} and \Cref{lm:make_distant}, when the targets designed for each group are considered separately and in isolation, at the end of step $2$, there are at most $\lfloor k/g \rfloor$ targets designed for group $\ell$ and the total improvement in this group is $1/4$-approximation of $\opt_{\ell}^{\lceil k/g \rceil}$. By \Cref{lm:step_three}, \Cref{lm:shift}, and \Cref{lm:step_four}, we lose another $4g^2$ factor compared to step $2$. In total, \Cref{alg:approx} results in $\sw_\ell \geq 1/(16g^2) \opt_{\ell}^{\lceil k/g \rceil}$, for all groups $1\leq \ell \leq g$. Also, when $k \geq g$, the total number of targets is at most $g \lfloor k/g \rfloor \leq k$.
\end{proof}

\begin{proof}[Proof of \Cref{thm:approx}]
Given \Cref{thm:modified}, it suffices to argue $\opt_{\ell}^{\lceil k/g \rceil} \geq  \opt_{\ell}^{k}/g$; i.e., when the number of targets increases by a factor, here $g$, the optimal total improvement increases by at most that factor. This statement is straightforward using subadditivity of total improvement as a function of the set of targets. Specifically, consider the optimal $k$-target solution and an arbitrary partition with $g$ parts of size $\lceil k/g \rceil$ or $\lfloor k/g \rfloor$; by subadditivity, one of the parts provides at least $1/g$ of the total improvement. 
\end{proof}

\begin{proof}[Proof of \Cref{cor:approx}] \Cref{recurrence-exact-fairness-objective} in \Cref{sec:max_min} outputs the Pareto frontier for groups' social welfare. By definition, the solution provided in \Cref{alg:approx} is dominated by a solution on the Pareto frontier. By computing the factor of simultaneous approximate optimality of each solution on the Pareto frontier, we find the solution that achieves the best simultaneous approximation factor $\alpha^3$, and by \Cref{thm:approx}, this solution is simultaneously $\Omega(1/g^3)$-approximately optimal.
\end{proof}

\begin{remark}[a weaker benchmark and a tighter gap]
In contrast with \Cref{thm:approx} that measures the performance of \Cref{alg:approx} with respect to the optimal $k$-target solution for each group (the notion of simultaneous approximate optimality), \Cref{thm:modified} measures the performance with respect to the optimal $\lceil k/g \rceil$-target solution for each group. Since the lower bound provided in \Cref{ex:lowerbound} shows achieving better than $1/g$ of either of these benchmarks is not possible, there is only a factor $g$ gap in the performance of the algorithm and the lower bound with respect to the optimal $\lceil k/g \rceil$-target solution. 
\end{remark}

\section{Generalization Guarantees}\label{sec:gen_guarantees}
In this section, we generalize our results to a setting where we only have sample access to agents and provide sample complexity results. \Cref{sec:gen_single} provides a guarantee for the maximization objective in absence of fairness, and \Cref{sec:gen_single} provides a guarantee for the fairness objectives.

\subsection{Generalization Guarantees For the Maximization Objective}
\label{sec:gen_single}
Suppose there is a distribution $\mathcal{D}$ over agents' positions. Our goal is to find a set of $k$ targets $\T$ that maximizes expected improvement of an agent when we only have access to $n$ agents sampled from $\mathcal{D}$. For any distribution $\mathcal{D}$ over agents' positions, we define $I_{\mathcal{D}}(\mathcal{T}) = \E_{p\sim \mathcal{D}}[I_p(\mathcal{T})]$, where $I_p(\mathcal{T})$ captures the improvement of agent $p$ given the targets in $\mathcal{T}$. 
In~\Cref{thm:generalization-absent-fairness}, we provide a generalization guarantee that shows if we sample a set $S$ of size $n\geq \eps^{-2}\big(\Delta_{\max}^2(k\ln(k)+\ln(1/\delta))\big)$ drawn \emph{i.i.d} from $\mathcal{D}$, then with probability at least $1-\delta$, for all sets $\T$ of $k$ targets, we can bound the difference between average performance over $S$ and actual expected performance, such that $\big|I_S(\mathcal{T})-I_{\mathcal{D}}(\mathcal{T})\big|\leq \mathcal{O}(\eps)$. Formally, we show the following theorem holds:
\begin{theorem}
(Generalization of the maximization objective) Let $\mathcal{D}$ be a distribution over agents' positions. For any $\eps>0$, $\delta>0$, and number of targets $k$, if $S=\{p_i\}_{i=1}^n$ is drawn i.i.d.\ from $\mathcal{D}$ where 
$n\geq \eps^{-2}\Delta_{\max}^2\big(k\ln(k)+\ln(1/\delta)\big)$,
then with probability at least $1-\delta$, for all sets $\T$ of $k$ targets, $\big|I_S(\mathcal{T})-I_{\mathcal{D}}(\mathcal{T})\big|\leq \mathcal{O}(\eps)$.
\label{thm:generalization-absent-fairness}
\end{theorem}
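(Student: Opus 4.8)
The plan is to show that the family of improvement functions
\[
  \mathcal{F} = \Big\{\, f_{\T}(\cdot) = I_{\cdot}(\T) \;:\; \T \text{ a set of } k \text{ targets} \,\Big\}
\]
has bounded pseudo-dimension, and then feed that into the standard uniform-convergence bound for real-valued function classes. Two preliminary observations set this up. First, every $f_{\T}$ takes values in $[0,\Delta_{\max}]$, since an agent at $p$ can only move to a target in $(p,\,p+\Delta_{\max}]$; so $\mathcal{F}$ is a class of $[0,H]$-bounded functions with $H=\Delta_{\max}$. Second, once we establish $\PDim(\mathcal{F}) = \mathcal{O}(k\ln k)$, the textbook guarantee for $[0,H]$-bounded classes of pseudo-dimension $d$ — that $n = \mathcal{O}\!\big(\tfrac{H^2}{\eps^2}(d + \ln(1/\delta))\big)$ i.i.d.\ samples force $\sup_{f\in\mathcal{F}}|I_S(f)-I_{\mathcal{D}}(f)| \le \eps$ with probability $1-\delta$ — yields exactly the claimed sample complexity with $d=\mathcal{O}(k\ln k)$ and $H=\Delta_{\max}$.

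The heart of the argument is the pseudo-dimension bound, and the key structural fact is that the subgraph condition decomposes into \emph{axis-aligned} threshold tests on the target coordinates. Writing a candidate solution as $\T=(t_1,\dots,t_k)\in\mathbb{R}^k$, and working in the common-capacity model (so $\Delta=\Delta_{\max}$): because an agent at $p$ moves to the nearest target in $(p,\,p+\Delta]$, we have $f_{\T}(p)\ge r$ precisely when no target lies in $(p,\,p+r)$ and at least one target lies in $[p+r,\,p+\Delta]$, i.e.
\[
  f_{\T}(p)\ge r \iff \Big(\bigwedge_{j=1}^{k}\neg\big(p<t_j<p+r\big)\Big)\ \wedge\ \Big(\bigvee_{j=1}^{k}\big(p+r\le t_j\le p+\Delta\big)\Big).
\]
Every atomic predicate here compares a single coordinate $t_j$ to one of the three constants $p,\,p+r,\,p+\Delta$. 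This is the step I would want to get exactly right, as it converts the awkward ``nearest reachable target'' selection rule into a small Boolean combination of one-dimensional comparisons.

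To bound $\PDim(\mathcal{F})$, fix $m$ witness pairs $(p_1,r_1),\dots,(p_m,r_m)$ and count the distinct sign patterns $\big(\mathbbm{1}[f_{\T}(p_i)\ge r_i]\big)_{i=1}^m$ realizable as $\T$ ranges over $\mathbb{R}^k$. By the decomposition, each coordinate $t_j$ enters only through its comparison to the at most $3m$ constants $\{p_i,\,p_i+r_i,\,p_i+\Delta\}_{i=1}^m$; these split the $t_j$-axis into at most $3m+1$ intervals, so the product partition carves $\mathbb{R}^k$ into at most $(3m+1)^k$ cells. On each cell every atomic predicate — hence the entire sign pattern — is constant, so at most $(3m+1)^k$ patterns occur. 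Shattering $m$ points requires $2^m\le(3m+1)^k$, which forces $m=\mathcal{O}(k\ln k)$, giving $\PDim(\mathcal{F})=\mathcal{O}(k\ln k)$. Plugging this and $H=\Delta_{\max}$ into the uniform-convergence theorem produces the stated bound $n\ge\eps^{-2}\Delta_{\max}^2\big(k\ln k+\ln(1/\delta)\big)$.

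I expect the pseudo-dimension step to be the main obstacle: the delicate point is justifying that the selection rule reduces to single-coordinate thresholds (rather than requiring comparisons \emph{among} the $t_j$), and that the continuous dependence of $f_{\T}$ on the target locations is absorbed by the subgraph/cell-counting view — this is what keeps a stray $\ln(1/\eps)$ factor out of the final bound, which a naive covering-number argument would introduce. A minor item to dispatch cleanly is the treatment of boundary and tie cases (two targets coinciding, an agent sitting exactly at a breakpoint), which perturb only measure-zero parameter sets and do not affect the cell count.
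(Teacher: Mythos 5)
Your proposal is correct and shares the paper's two-step skeleton---bound the pseudo-dimension of the class of improvement functions by $\mathcal{O}(k\ln k)$, then invoke Pollard's uniform-convergence bound for $[0,\Delta_{\max}]$-valued classes---but the pseudo-dimension lemma itself is proved by a genuinely different route. The paper applies the piecewise-decomposability framework of Balcan et al.\ to the dual class: it introduces $\mathcal{O}(k^2)$ boundary hyperplanes, including the pairwise comparisons $\tau_i-\tau_j=0$ needed to identify \emph{which} target is closest, so that on each region the dual function is a fixed linear piece $\tau_r - p$; the general theorem then gives $\PDim = \mathcal{O}(k\ln(k\cdot k^2)) = \mathcal{O}(k\ln k)$. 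You instead observe that the subgraph indicator $\mathbbm{1}[f_{\T}(p)\ge r]$ never requires knowing which target is nearest---it is a Boolean combination of interval-membership tests, each comparing a single coordinate $t_j$ to one of the constants $p$, $p+r$, $p+\Delta$---and you count cells of the resulting axis-aligned product partition directly, getting $2^m \le (\mathcal{O}(m))^k$ and hence $m=\mathcal{O}(k\ln k)$. This is self-contained, avoids the pairwise-ordering hyperplanes entirely, and needs no identification of the piece functions; what the paper's framework buys in exchange is reusability (the same machinery handles the group-wise objectives later) without redoing a counting argument. Two small points: your restriction to the common-capacity model is unnecessary---with individualized capacities the third breakpoint is $p_i+\Delta_i$, still one constant per witness, so the count is unchanged, and the theorem as stated (with $\Delta_{\max}$) covers that case; and when counting regions you should include the breakpoints themselves as degenerate cells (giving $6m+1$ rather than $3m+1$ regions per axis) so that mixed strict/non-strict comparisons at a shared constant are handled---this changes nothing asymptotically, as you anticipate in your closing remark.
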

In particular, the solution $\T^*$ that maximizes improvement on $\mathcal{S}$, also maximizes improvement on $\mathcal{D}$ within an additive factor of $\mathcal{O}(\eps)$.

In order to prove ~\Cref{thm:generalization-absent-fairness}, we use two main ideas. First, using a framework developed by Balcan et al.~\cite{Balcan-STOC21}, we bound the \emph{pseudo-dimension} complexity of our improvement function. Then, using classic results from learning theory~\cite{pollard1984convergence}, we show how to
translate \emph{pseudo-dimension} bounds into generalization guarantees. The framework proposed by Balcan et al.~\cite{Balcan-STOC21} depends on the relationship between primal and dual functions. When the dual function is piece-wise constant, piece-wise linear or generally piece-wise structured, they show a general theorem that bounds the \emph{pseudo-dimension} of the primal function.
Formally \emph{pseudo-dimension} is defined as following:

\begin{definition}
(Pollard's Pseudo-Dimension) A class $\mathcal{F}$ of real-valued functions $P$-shatters a set of points $\mathcal{X} = \{x_1, x_2,\cdots, x_n\}$ if there exists a set of thresholds $\gamma_1, \gamma_2,\cdots, \gamma_n$ such that for every subset $T\subseteq \mathcal{X}$, there exists a function $f_T\in \mathcal{F}$ such that $f_T(x_i)\geq \gamma_i$ if and only if $x_i\in T$. In other words, all $2^n$ possible above/below patterns are achievable for targets $\gamma_1,\cdots, \gamma_n$. The pseudo-dimension of $\mathcal{F}$, denoted by $\PDim(\mathcal{F})$, is the size of the largest set of points that it $P$-shatters.
\end{definition}

Balcan et al.~\cite{Balcan-STOC21} show when the dual function is piece-wise structured, the \emph{pseudo-dimension} of the primal function gets bounded as following:
\begin{theorem}
(Bounding Pseudo-Dimension~\cite{Balcan-STOC21})
Let $\mathcal{U}=\{u_{\vec{\rho}}\mid \vec{\rho}\in \mathcal{P}\subseteq \mathbb{R}^d\}$ be a class of utility functions defined over a $d$-dimensional parameter space. Suppose the dual class $\mathcal{U}^*$ is $(\mathcal{F},\mathcal{G},m)$-piecewise decomposable, where the boundary functions $\mathcal{G}=\{f_{\vec{a}, \theta}: \mathcal{U}\rightarrow \{0,1\}\mid \vec{a}\in \mathbb{R}^d, \theta\in \mathbb{R}\}$ are halfspace indicator functions $g_{\vec{a},\theta}:u_{\rho}\rightarrow \mathbb{I}_{\vec{a}\cdot\vec{\rho}\leq \theta}$ and the piece functions $\mathcal{F}=\{f_{\vec{a},\theta}:\mathcal{U}\rightarrow \mathbb{R}\mid \vec{a}\in \mathbb{R}^d, \theta\in \mathbb{R}\}$ are linear functions $f_{\vec{a},\theta}:u_{\rho}\rightarrow \vec{a}\cdot \vec{\rho}+\theta$, and $m$ shows the number of boundary functions. Then, $\PDim(\mathcal{U}) = \mathcal{O}(d\ln(dm))$.
\label{thm-Balcan-Pdim}
\end{theorem}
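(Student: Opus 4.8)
The plan is to prove the bound directly from the definition of pseudo-dimension by a dual shattering / hyperplane-arrangement counting argument, the standard approach for results of this flavor. Suppose the primal class $\mathcal{U}$ $P$-shatters a set of $N$ instances $x_1, \ldots, x_N$ with witness thresholds $z_1, \ldots, z_N$; the goal is to upper bound $N$, since $\PDim(\mathcal{U})$ is the largest such $N$. First I would dualize: for each instance $x_i$ let $u_{x_i}^* \in \mathcal{U}^*$ be the dual function $u_{x_i}^*(\vec{\rho}) = u_{\vec{\rho}}(x_i)$ over the parameter space $\mathcal{P} \subseteq \mathbb{R}^d$. By the definition of $P$-shattering, as $\vec{\rho}$ ranges over $\mathcal{P}$ the vector $\big(\mathbb{I}[u_{x_i}^*(\vec{\rho}) \geq z_i]\big)_{i=1}^N$ must realize all $2^N$ patterns in $\{0,1\}^N$. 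Hence it suffices to show that the number of distinct such patterns is at most $(cNm)^{O(d)}$ for an absolute constant $c$, and then to solve the resulting inequality $2^N \leq (cNm)^{O(d)}$.

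The counting proceeds in two levels. By $(\mathcal{F},\mathcal{G},m)$-piecewise decomposability, each dual function $u_{x_i}^*$ is governed by $m$ boundary halfspaces of the form $\mathbb{I}[\vec{a} \cdot \vec{\rho} \leq \theta]$, and on any region of $\mathcal{P}$ where the bit vector of those $m$ indicators is constant, $u_{x_i}^*$ coincides with a single linear piece $\vec{a} \cdot \vec{\rho} + \theta \in \mathcal{F}$. Collecting all $Nm$ boundary hyperplanes over the $N$ dual functions, I would invoke the standard arrangement bound: $Nm$ hyperplanes partition $\mathbb{R}^d$ (and hence the convex subset $\mathcal{P}$) into at most $\sum_{t=0}^d \binom{Nm}{t} = O\big((Nm)^d\big)$ cells, and on each such cell every $u_{x_i}^*$ is simultaneously a fixed affine function $\vec{a}_i \cdot \vec{\rho} + \theta_i$ (level one). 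Within a single cell the comparison $\mathbb{I}[u_{x_i}^*(\vec{\rho}) \geq z_i]$ becomes the indicator of a genuine halfspace $\vec{a}_i \cdot \vec{\rho} \geq z_i - \theta_i$, so the number of distinct threshold patterns of these $N$ halfspaces is the number of cells in their arrangement, at most $\sum_{t=0}^{d}\binom{N}{t} = O\big(N^d\big)$ (level two).

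Multiplying the two levels, the total number of distinct shattering patterns over all of $\mathcal{P}$ is at most $O\big((Nm)^d\big)\cdot O\big(N^d\big) = (cNm)^{O(d)}$. Combining with the shattering requirement $2^N \leq (cNm)^{O(d)}$ and taking logarithms gives $N = O\big(d\log(Nm)\big)$; since $\log N$ grows sublinearly in $N$, this self-referential inequality resolves to $N = O\big(d\log(dm)\big)$, which is the claimed bound on $\PDim(\mathcal{U})$.

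I expect the main obstacle to be carrying out the level-one/level-two decomposition correctly rather than any single calculation. The naive attempt to treat the $N$ threshold surfaces $\{u_{x_i}^*(\vec{\rho}) = z_i\}$ as global hyperplanes fails, because each $u_{x_i}^*$ is only piecewise linear, so those surfaces are themselves piecewise linear and not amenable to one clean arrangement bound. The crux of the argument is to first cut $\mathcal{P}$ using only the $Nm$ linear boundary hyperplanes, on which the pieces are honestly affine, and only then count threshold sign patterns cell by cell. A secondary point needing care is confirming that restricting the arrangement to the convex set $\mathcal{P}$ only decreases the cell count (so the $\binom{Nm}{t}$ and $\binom{N}{t}$ bounds still apply), and that the final transcendental inequality is solved with the correct constants to land at $O(d\log(dm))$ rather than a weaker factor.
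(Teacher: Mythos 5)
Your argument is sound, but there is an important mismatch with the paper itself: the paper never proves this statement. \Cref{thm-Balcan-Pdim} is imported verbatim from \citet{Balcan-STOC21} and used as a black box (the paper only proves the downstream \Cref{lem:bounding-pdim} by verifying piecewise decomposability), so there is no in-paper proof to compare against. What you have written is essentially the proof from the cited reference, specialized to this setting: dualize the shattered set, cut $\mathcal{P}$ by the $Nm$ boundary hyperplanes into $\sum_{t=0}^{d}\binom{Nm}{t} = O\big((Nm)^d\big)$ cells on which every dual function is honestly affine, count the $O\big(N^d\big)$ sign patterns of the $N$ threshold halfspaces within each cell, multiply, and solve $2^N \leq (cNm)^{O(d)}$ to get $N = O(d\log(dm))$. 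This is correct, and you rightly flag the one genuine pitfall, namely that the threshold surfaces $\{u^*_{x_i}(\vec{\rho}) = z_i\}$ are only piecewise linear and cannot be fed into a single global arrangement bound; refining by the boundary hyperplanes first is exactly the fix. The only substantive difference in route is generality: Balcan et al.\ prove the theorem for arbitrary boundary and piece classes, bounding $\PDim(\mathcal{U})$ in terms of the VC dimension of the dual boundary class and the pseudo-dimension of the dual piece class, whereas your argument hard-codes the halfspace/linear case via hyperplane-arrangement counting. Your version is more elementary and self-contained; theirs buys a reusable general tool. Two minor notes: convexity of $\mathcal{P}$ is not needed (restricting to any subset only decreases the number of realized patterns), and the final self-referential inequality $N \leq Cd\log(cNm)$ does resolve to $N = O(d\log(dm))$ by the standard $N \leq a\log N + b \Rightarrow N = O(a\log a + b)$ manipulation, as you claim.
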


We use~\Cref{thm-Balcan-Pdim} to bound the \emph{pseudo-dimension} of the improvement function.

\begin{lemma}
\label{lem:bounding-pdim}
Let $\mathcal{U}=\{u_{\mathcal{T}}:p\rightarrow u_{\mathcal{T}}(p) \mid \mathcal{T}\in \mathbb{R}^k, p\in \mathbb{R}\}$ be a set of functions, where each function 
defined by a set of $k$ targets, takes as input a point $p\in \mathbb{R}$ that captures an agent's position, and outputs a number showing the improvement that the agent can make. Then, $\PDim(\mathcal{U})=\mathcal{O}(k\ln(k))$.
\end{lemma}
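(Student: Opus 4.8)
The plan is to apply \Cref{thm-Balcan-Pdim} directly, so the main work is to set up the primal/dual framework correctly and verify that the dual class is piecewise decomposable with the right piece and boundary functions. Here the parameter space is the set of $k$ targets, so $d = k$, with $\vec{\rho} = \mathcal{T} = (\tau_1, \ldots, \tau_k) \in \mathbb{R}^k$. The utility $u_{\mathcal{T}}(p)$ is the improvement of an agent at position $p$ given targets $\mathcal{T}$, namely $u_{\mathcal{T}}(p) = \tau^* - p$ where $\tau^* = \min\{\tau \in \mathcal{T} : p < \tau \leq p + \Delta_{\max}\}$ (and $0$ if no such target exists). To invoke the theorem I must pass to the \emph{dual} class $\mathcal{U}^*$, whose functions are indexed by the agent position $p$ and take the parameter vector $\mathcal{T}$ as argument: $u_p^*(\mathcal{T}) = u_{\mathcal{T}}(p)$.

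First I would fix an agent position $p$ and understand how $u_p^*(\mathcal{T})$ behaves as a function of $\mathcal{T} \in \mathbb{R}^k$. The key observation is that which target the agent selects depends only on the relative ordering of the coordinates $\tau_1, \ldots, \tau_k$ against the two fixed thresholds $p$ and $p + \Delta_{\max}$, and on which is the smallest among the $\tau_j$ lying in the window $(p, p+\Delta_{\max}]$. Each of these comparisons is a halfspace condition in $\mathcal{T}$: conditions of the form $\tau_j > p$, $\tau_j \leq p + \Delta_{\max}$, and $\tau_j < \tau_{j'}$ are all of the form $\vec{a}\cdot\mathcal{T} \leq \theta$ for suitable $\vec{a}, \theta$. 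Within each cell of the arrangement of these hyperplanes, the identity of the selected target $j^*$ is fixed, so $u_p^*(\mathcal{T}) = \tau_{j^*} - p$ is a \emph{linear} function of $\mathcal{T}$ (and on the ``no target reachable'' cell it is the constant $0$, also linear). This is exactly the piecewise structure required: the piece functions are linear and the boundary functions are halfspace indicators.

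Next I would count the number $m$ of boundary functions needed. The relevant hyperplanes are: the $k$ comparisons $\tau_j > p$, the $k$ comparisons $\tau_j \leq p + \Delta_{\max}$, and the $\binom{k}{2}$ pairwise comparisons $\tau_j < \tau_{j'}$ determining which reachable target is smallest. Thus $m = O(k^2)$. Substituting $d = k$ and $m = O(k^2)$ into \Cref{thm-Balcan-Pdim} gives
\[
\PDim(\mathcal{U}) = \mathcal{O}\bigl(d \ln(dm)\bigr) = \mathcal{O}\bigl(k \ln(k \cdot k^2)\bigr) = \mathcal{O}(k \ln k),
\]
which is the claimed bound.

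The main obstacle I anticipate is not the counting but arguing cleanly that the piece functions are genuinely linear and that the halfspace boundaries suffice to pin down the selected target. In particular, one must be careful that the ``$\min$'' operation does not introduce additional nonlinearity: the subtlety is that identifying the minimizer among the reachable $\tau_j$ requires all the pairwise comparisons simultaneously, and I would verify that once the sign pattern of all $O(k^2)$ boundary functions is fixed, the winning index $j^*$ is uniquely and unambiguously determined, so that $u_p^*$ really is a single linear piece $\tau_{j^*} - p$ on each cell. Handling boundary ties (e.g. $\tau_j = \tau_{j'}$ or $\tau_j = p$) consistently is a minor technical point that does not affect the asymptotic count. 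Once this decomposition is established, the bound follows immediately from \Cref{thm-Balcan-Pdim}.
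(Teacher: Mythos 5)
Your proposal is correct and follows essentially the same route as the paper: pass to the dual class indexed by the agent position, decompose the $k$-dimensional target space using the $O(k^2)$ halfspaces $\tau_i = \tau_j$, $\tau_i = p$, $\tau_i = p + \Delta$, observe the piece functions are linear (or constant zero), and apply \Cref{thm-Balcan-Pdim} with $d=k$, $m = O(k^2)$. The only cosmetic difference is that you write the reachability window with $\Delta_{\max}$ where the paper uses the agent's own capacity $\Delta$, which does not affect the argument since the dual function is indexed by the agent.
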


\begin{proof}
We use Theorem~\ref{thm-Balcan-Pdim} to bound $\PDim(\mathcal{U})$. First, we define the dual class of $\mathcal{U}$ denoted by $\mathcal{U}^*$. The function class $\mathcal{U}^*=\{u^*_p:\mathcal{T}\rightarrow u_{p}(\mathcal{T}) \mid \mathcal{T}\in \mathbb{R}^k, p\in \mathbb{R}\}$ is a set of 
functions, where each function defined 
by an agent $p$, takes as input a set $\mathcal{T}\in \mathbb{R}^k$ of $k$ targets \footnote{If the input consists of $k'$ targets where $k'<k$, it resembles the case where $k$ targets are used and $k-k'$ of them are ineffective, i.e., are put at position $\tau_{\min}$.}, and outputs the improvement that $p$ can make given $\mathcal{T}$. 
Geometrically, in the dual space, there are $k$ dimensions $\tau_1,\cdots,\tau_k$, and each dimension is corresponding to one target. In order to use~\Cref{thm-Balcan-Pdim}, we show that $\mathcal{U}^* = (\mathcal{F},\mathcal{G},k)$ is piecewise-structured. The boundary functions in $\mathcal{G}$ are defined as follows. 
If agent $p$ improves to a target $\tau_i$, then $0< \tau_i - p \leq \Delta$, where $\Delta$ is the improvement capacity of $p$. Additionally, between all the targets within a distance of at most $\Delta$, $p$ improves to the closest one. For each pair of integers $(i,j)$, where $1\leq i,j\leq k$, we add the hyperplane $\tau_i -\tau_j =0$ to $\mathcal{G}$. Above this hyperplane is the region where $\tau_i >\tau_j$, implying that $\tau_i$ comes after $\tau_j$. Below the hyperplane is the region where the ordering is reversed. In addition, for each target $\tau_i$, we add the boundary functions $\tau_i = p$ and $\tau_i = p+\Delta$ to $\mathcal{G}$. In the region between $\tau_i = p$ and $\tau_i = p+\Delta$, $\tau_i$ is effective and the agent can improve to it.
Now, the dual space is partitioned into a set of regions. In each region, either there exists a unique closest effective target $(\tau_r)$, or all the targets are ineffective. In the former case, the improvement that the agent makes is a linear function of its distance from the closest effective target $(f = \tau_r-p)$. In the later case, the agent makes no improvement ($f = 0$). Therefore, the piece functions in $\mathcal{F}$ are either constant or linear. Now, since the total number of boundary functions is $m=\mathcal{O}(k^2)$ and the space is $k$-dimensional, using~\Cref{thm-Balcan-Pdim},  $\PDim(\mathcal{U})$ is $\mathcal{O}(k\ln(k^3)) = \mathcal{O}(k\ln(k))$.
\end{proof}

Now, we are ready to prove~\Cref{thm:generalization-absent-fairness}.

\begin{proof}[Proof of~\Cref{thm:generalization-absent-fairness}]
Classic results from learning theory~\cite{pollard1984convergence} 
show the following generalization guarantees: Suppose $[0,H]$ is the range of functions in hypothesis class $\mathcal{H}$. For any $\delta\in(0,1)$, and any distribution $\mathcal{D}$ over $\mathcal{X}$, with probability $1-\delta$ over the draw of $\mathcal{S}\sim \mathcal{D}^n$, for all functions $h\in \mathcal{H}$, the difference between the average value of $h$ over $\mathcal{S}$ and its expected value gets bounded as follows:
\[\Big|\frac{1}{n}\sum_{x\in \mathcal{S}}h(x)-\E_{y\sim \mathcal{D}}[h(y)]\Big| = \mathcal{O}\Big(H\sqrt{\frac{1}{n}\Big(\PDim(\mathcal{H})+\ln(\frac{1}{\delta})\Big)}\Big)\]

In the case of maximizing improvement, $H=\Delta_{max}$ and $\PDim(\mathcal{H}) = \mathcal{O}(k\ln(k))$. By setting $n\geq \eps^{-2}\Delta_{max}^2\big(k\ln(k)+\ln(1/\delta)\big)$, with probability at least $1-\delta$, the difference between the average performance over $\mathcal{S}$ and the expected performance on $\mathcal{D}$ gets upper-bounded by $\mathcal{O}(\eps)$. 
\end{proof}

\subsection{Generalization Guarantees For Fairness Objectives}\label{sec:gen_multiple}

Suppose there is a distribution $\mathcal{D}_{\ell}$ of agents' positions for each group $\ell$. Let $\mathcal{D}=\sum_{\ell=1}^g \alpha_{\ell}\mathcal{D}_{\ell}$ be a weighted mixture of distributions $\mathcal{D}_1,\cdots,\mathcal{D}_g$. Let $\alpha_{\min} = \min_{1\leq \ell \leq g}\alpha_{\ell}$. Suppose we have sampling access to $\mathcal{D}$ and cannot directly sample from $\mathcal{D}_1,\cdots,\mathcal{D}_g$. Our goal is to derive generalization guarantees for different objective functions across multiple groups 
when we only have access to a set $S$ of $n$ agents sampled from distribution $\mathcal{D}$.
Let $I_{G_{\ell}}(\mathcal{T})$ denote the average improvement of agents in group $G_{\ell}\subseteq S$ given a set $\mathcal{T}$ of $k$ targets. Let $I_{\mathcal{D}_{\ell}}(\mathcal{T}) = \E_{p\sim \mathcal{D}_{\ell}}[I_p(\mathcal{T})]$, where $I_p(\mathcal{T})$ captures the improvement of agent $p$ given $\mathcal{T}$.
In~\Cref{thm:generalization-fairness}, we show if we sample a set $S$ of $\mathcal{O}\Big(\alpha_{\min}^{-1}\Big(\eps^{-2}\Delta_{\max}^2\big(k\ln(k)+\ln(g/\delta)\big)+\ln(g/\delta)\Big)\Big)$ examples drawn \emph{i.i.d.}\ from $\mathcal{D}$, then for all sets $\mathcal{T}$ of $k$ targets and for all groups $\ell$, $\big|I_{G_{\ell}}(\mathcal{T})-I_{\mathcal{D}_{\ell}}(\mathcal{T})\big|\leq \mathcal{O}(\eps)$.

\begin{theorem}
(Generalization across multiple groups) Let $\mathcal{D}$ be a distribution over agents' positions. For any $\eps>0$, $\delta>0$, and number of targets $k$, if $S=\{p_i\}_{i=1}^n$ consisting of $g$ groups $\{G_{\ell}\}_{\ell=1}^g$ is drawn i.i.d.\ from $\mathcal{D}$,
where $n\geq (2/\alpha_{\min})\big(\eps^{-2}\Delta_{\max}^2(k\ln(k)+\ln(2g/\delta))+4\ln(2g/\delta)\big)$,
then with probability at least $1-\delta$, for all sets $\mathcal T$ of $k$ targets, for all groups $\ell$, $\big|I_{G_{\ell}}(\mathcal{T})-I_{\mathcal{D}_{\ell}}(\mathcal{T})\big|\leq \mathcal{O}(\eps)$.
\label{thm:generalization-fairness}
\end{theorem}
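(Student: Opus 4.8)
The plan is to reduce to the single-group generalization bound of \Cref{thm:generalization-absent-fairness} by conditioning on how the $n$ samples drawn from the mixture $\mathcal{D}$ distribute among the $g$ groups. There are two sources of error to control: (i) the randomness in the number of samples $n_\ell := |G_\ell|$ landing in each group, and (ii) the within-group uniform-convergence error once a group's sample size is fixed. I would allocate $\delta/2$ of the failure probability to each and combine by a union bound.

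First I would control the group sizes. Since drawing i.i.d.\ from $\mathcal{D}=\sum_{\ell}\alpha_\ell\mathcal{D}_\ell$ is equivalent to first assigning each sample to group $\ell$ with probability $\alpha_\ell$ and then drawing from $\mathcal{D}_\ell$, each $n_\ell$ is $\mathrm{Binomial}(n,\alpha_\ell)$ with mean $\alpha_\ell n \geq \alpha_{\min}n$. A multiplicative Chernoff bound (lower tail at relative deviation $1/2$) gives $\Pr[n_\ell < \alpha_\ell n/2] \leq \exp(-\alpha_\ell n/8) \leq \exp(-\alpha_{\min}n/8)$. The $(8/\alpha_{\min})\ln(2g/\delta)$ contribution in the stated sample complexity is exactly what forces $\exp(-\alpha_{\min}n/8)\leq \delta/(2g)$, so a union bound over the $g$ groups yields, with probability at least $1-\delta/2$, that $n_\ell \geq \alpha_{\min}n/2$ simultaneously for every group.

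Next I would condition on the group-membership assignment. The key observation is that, conditioned on which of the $n$ samples belong to group $\ell$, those $n_\ell$ samples are i.i.d.\ draws from $\mathcal{D}_\ell$. Hence I can invoke \Cref{thm:generalization-absent-fairness} (whose pseudo-dimension input $\PDim(\mathcal{U})=\mathcal{O}(k\ln k)$ comes from \Cref{lem:bounding-pdim}) separately for each group with sample size $n_\ell$, range $H=\Delta_{\max}$, and failure probability $\delta/(2g)$. On the event of the previous paragraph, $n_\ell \geq \alpha_{\min}n/2 \geq \eps^{-2}\Delta_{\max}^2(k\ln k + \ln(2g/\delta))$, which, because the generalization bound is monotonically decreasing in the sample size, is enough to force $\big|I_{G_\ell}(\mathcal{T})-I_{\mathcal{D}_\ell}(\mathcal{T})\big|\leq \mathcal{O}(\eps)$ uniformly over all $k$-target sets $\mathcal{T}$ with conditional failure probability at most $\delta/(2g)$. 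A union bound over the $g$ groups bounds the total within-group failure probability by $\delta/2$, and combining the two failure events (by the law of total probability) gives the claim for all groups and all $\mathcal{T}$ with probability at least $1-\delta$.

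The step I expect to be the main obstacle is the conditioning argument: one must verify that the within-group samples remain genuinely i.i.d.\ from $\mathcal{D}_\ell$ after conditioning on the \emph{random} counts $n_\ell$, and that applying \Cref{thm:generalization-absent-fairness} at a random value of $n_\ell$ is legitimate. I would resolve this by conditioning on the entire group-assignment vector (not merely on the counts), which makes the within-group samples exactly i.i.d.\ from $\mathcal{D}_\ell$, and by invoking monotonicity of the bound in the sample size so that any realized $n_\ell$ above the threshold suffices. The remaining pieces (the Chernoff constant and the split of $\delta$ and of $\ln(2g/\delta)$ across the two events) are bookkeeping that match the stated bound term by term.
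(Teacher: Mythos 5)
Your proposal matches the paper's proof: the paper likewise defines the events $A_\ell=\{n_\ell\geq n\alpha_\ell/2\}$, bounds their failure via a Chernoff bound using the $8\ln(2g/\delta)/\alpha_{\min}$ term, conditions on $A_\ell$ to invoke \Cref{thm:generalization-absent-fairness} per group with failure probability $\delta/(2g)$, and union-bounds over the $g$ groups. Your explicit treatment of the conditioning subtlety (conditioning on the full group-assignment vector so that within-group samples are i.i.d.\ from $\mathcal{D}_\ell$) is a point the paper leaves implicit, but the argument is otherwise the same.
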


\begin{proof}
Let $S$ be partitioned into $g$ groups where each group $G_{\ell}$ has size $n_{\ell}$. 
First, for each group $\ell$, let $A_{\ell}$ denote the event that $n_{\ell}\geq (n\alpha_{\ell})/2$. Using Chernoff-Hoeffding bounds we have $\Pr[n_{\ell}<(n\alpha_{\ell})/2]\leq e^{(-n\alpha_{\ell})/8}\leq \delta/(2g)$. The last inequality holds since $n\geq 8\ln(2g/\delta)/\alpha_{\ell}$. Next, for each group $\ell$, let $B_{\ell}$ denote the event that  $\big|I_{G_{\ell}}(\mathcal{T})-I_{\mathcal{D}_{\ell}}(\mathcal{T})\big|\leq \mathcal{O}(\eps)$, then: 
\begin{align}
\Pr[B_{\ell}]\geq \Pr[B_{\ell}\cap A_{\ell}] = \Pr[B_{\ell}\mid A_{\ell}]\cdot \Pr[A_{\ell}]\geq (1-\delta/(2g))(1-\delta/(2g))\geq (1-\delta/g)
\label{bound-prob-B-ell}
\end{align}
In the above statement, inequality $\Pr[B_{\ell}\mid A_{\ell}]\geq (1-\delta/(2g))$ holds since given $A_{\ell}$ happens, then $n_{\ell}\geq \eps^{-2}\Delta_{max}^2(k\ln(k)+\ln(2g/\delta))$, and by~\Cref{thm:generalization-absent-fairness}, event $B_{\ell}$ happens with probability at least $1-\delta/(2g)$. Now, by~\Cref{bound-prob-B-ell}, $\Pr[B_{\ell}]\geq 1-\delta/g$. By applying a union bound, event $B_{\ell}$ happens with probability at least $1-\delta$ for any group $\ell$. 
\end{proof}

In particular, solution $\T^*$ satisfying one of the fairness notions considered in this paper, e.g., simultaneous approximate optimality or maxmizing minimum improvement across groups, on input $S$, achieves a performance guarantee within an additive factor of $\mathcal{O}(\eps)$ on inputs drawn from $\mathcal{D}$.

\section{Extensions and Open Problems}\label{sec:extensions}

This section provides two extensions to our objective function: 1) maximizing social welfare subject to a lower bound on the number of improving agents, and 2) optimizing the number of target levels. The section concludes with our main open problem of optimizing the factor of simultaneous approximate optimality and tightening the gap between the upper and lower bounds.

\subsection{Extension 1: A lower bound on the number of agents that improve}

Consider~\Cref{def:recurrence-dp-one-group} whose goal is to find a set of at most $k$ target levels that maximizes the total improvement for a collection of $n$ agents. It is possible that the solution of this algorithm focuses on a small fraction of the agents and does not help many agents to improve. In~\Cref{def:recurrence-dp-extension}, we show how to modify~\Cref{def:recurrence-dp-one-group} to ensure at least $n_{\ell b}$ agents improve. 
The main idea for the recursive step (item $4$ in \Cref{def:recurrence-dp-extension}) is to first consider the potential leftmost targets $\tau' > \tau$, let $x$ denote the number of agents that are within reach to $\tau'$,  and use the smaller subproblem of finding the optimal targets for agents on or to the right of $\tau'$ with one less available target level and an updated lower bound of $\eta-x$, i.e., $S(\tau', \kappa-1, \eta-x)$. We add the performance of each potential leftmost target to the optimal improvement of the remaining subproblem and pick the leftmost target that maximizes this summation.

\begin{algo}
\label{def:recurrence-dp-extension}
Run dynamic program based on function $S$, defined below, that takes $\cup_i \{p_i\}$ and $k$ as input and outputs $S(\tau_{\min},k,n_{\ell b})$, as the optimal improvement, and $S'(\tau_{\min},k,n_{\ell b})$, as the optimal set of targets; where $\tau_{\min}= \min\{\tau \in \T_p\}$ and $\tau_{\max}= \max\{\tau \in \T_p\}$. $S(\tau,\kappa,\eta)$ captures the maximum improvement possible for agents on or to the right of $\tau\in \T_p$ when $\kappa$ target levels can be selected and at least $\eta$ agents need to improve. If $S(\tau_{\min},k,n_{\ell b})=-\infty$ then incentivizing at least $n_{lb}$ agents to improve is impossible. Function $S$ is defined as follows.
\begin{itemize}

\item[1)]  For any $\tau\in \T_p, \eta\geq 1$, we have $S(\tau,0,\eta)=-\infty$.

\item[2)]  For any $1\leq \kappa\leq k, \eta\geq 1$, $S(\tau_{max},\kappa,\eta)=-\infty$, where $\tau_{max}=\max\{\tau\in \T_p\}$. This holds since no agents can improve to $\tau_{max}$, however at least $\eta$ agents to the right of $\tau_{max}$ need to improve which is a contradiction.

\item[3)]  For any $\tau\in \T_p, 0\leq \kappa\leq k, \eta\leq 0$, $S(\tau,\kappa,\eta)=T(\tau,\kappa)$ where function $T$ is defined in~\Cref{def:recurrence-dp-one-group}.

\item[4)]  For any $\tau\in \T_p, \tau< \tau_{max}$, $1\leq \kappa\leq k$, and $1\leq \eta\leq n$:

\begin{gather*} 
S(\tau,\kappa,\eta) = \max_{\tau'\in \Tau_p \ \text{s.t} \ \tau' > \tau}\Bigg(S(\tau',\kappa-1,\eta-\mathbbm{1}\big[i\mid \tau\leq p_i <\tau' \text{ s.t. } \tau'-p_i \leq\Delta_i\big]) \ + \sum_{\tau\leq p_i <\tau'\text{ s.t. } \tau'-p_i \leq\Delta_i}(\tau'-p_i)\Bigg)
\end{gather*} 

\end{itemize}
$S'(\tau,\kappa,\eta)$ keeps track of the optimal set of targets corresponding to $S(\tau,\kappa,\eta)$.
\end{algo}

\subsection{Extension 2: Optimizing the number of target levels}

The nonmonotonicity property may make adding a new target level to the current placement reduce the maximum improvement (see~\Cref{fig:nonmonotone_sub}), or wasteful if we place the new target level somewhere no agent can reach or on top of an existing target. Therefore, when considering $k = 1, 2, \ldots, n$, it is possible that the maximum total improvement is achieved at $k < n$. Using the dynamic program based on~\Cref{def:recurrence-dp-one-group} we can find the minimum value of $k$ that satisfies this property and minimizes the number of targets subject to achieving maximum total improvement. Furthermore, by finding the total amount of improvement for different values of $k$, the principal can decide how many targets are sufficient to achieve a desirable total improvement (bi-criteria objective).

\subsection{Open Problem: Tightening the approximation gap}

\Cref{alg:approx}, as stated in \Cref{thm:approx}, provides an $\Omega(1/g^3)$-approximation simultaneous guarantee compared to the optimal solution for each group using at most $k$ targets; and as stated in \Cref{thm:modified}, provides an $\Omega(1/g^2)$-approximation simultaneous guarantee compared to the optimal solution for each group using at most $\lceil k/g \rceil$ targets. \Cref{ex:lowerbound}, on the other hand, shows an instance where no solutions with $> 1/g$ simultaneous approximation for the groups is possible for either of the benchmarks. Therefore, there is a gap of $\mathcal{O}(g^2)$ for the first, and a gap of $\mathcal{O}(g)$ for the second benchmark. Finding the optimal order of approximation guarantees for these benchmarks and tight lower bounds are the main problems left open by our work.


\bibliographystyle{plainnat}
\newpage
{\footnotesize
\bibliography{ref}
}


\newpage
\appendix
\section{Missing Proofs of \Cref{sec:max_total_improvement}}\label{app:missing_max_tot}
\subsection{Proof of~\Cref{thm:total-improvement}}

\begin{numberedtheorem}{\ref{thm:total-improvement}}
\Cref{def:recurrence-dp-one-group} finds a set of targets that achieves the optimal social welfare (maximum total improvement) that is feasible using at most $k$ targets given $n$ agents. The algorithm runs in $\mathcal{O}(n^3)$.
\end{numberedtheorem}

\begin{proof}
Proof of correctness follows by induction. Suppose that the value computed for all $T(\tau',\kappa')$ where $(\tau',\kappa') < (\tau,\kappa)$ is correct. Here ``$<$'' means 
$(\tau',\kappa')$ is computed before $(\tau,\kappa)$ which is when $\kappa'<\kappa$ and $\tau'\geq \tau$.  First, if either $\tau= \tau_{\text{max}}$ or $\kappa= 0$, the induction hypothesis holds since $T(\tau_{\max},\kappa)=0$ for all $1\leq \kappa\leq k$, and $T(\tau,0) = 0$, for all $\tau\in\T_p$. To show the inductive step holds note that the algorithm considers the optimal value for $T(\tau,\kappa)$ as the maximum of the $\displaystyle T(\tau',\kappa-1) \ + \textstyle\sum_{\tau\leq p_i <\tau'\text{ s.t. } \tau'-p_i \leq\Delta_i}(\tau'-p_i)$ over all the possible placement of the leftmost target $\tau'$. Since $T(\tau',\kappa-1)$ is computed correctly by the induction hypothesis and all the possible placements of the leftmost target are considered, the value obtained at $T(\tau,\kappa)$ is optimal and correct. 

Now we proceed to bounding the time-complexity. There are $\mathcal{O}(nk)$ subproblems to be computed. Consider a 
pre-computation stage where  $\sum_{\tau\leq p_i <\tau'\text{ s.t. } \tau'-p_i \leq\Delta_i}(\tau'-p_i)$ is computed for all pairs of $\tau, \tau'\in \T_p$. This stage takes $\mathcal{O}(n^3)$ time. Computation of each subproblem $T(\tau,\kappa)$ for all $\tau\in \T_p$ and $1\leq \kappa \leq k$ requires $\mathcal{O}(n)$ operations. This is because to compute $\max$ in property 3), we compute $\displaystyle T(\tau',\kappa-1) \ + \textstyle\sum_{\tau\leq p_i <\tau'\text{ s.t. } \tau'-p_i \leq\Delta_i}(\tau'-p_i)$ for $\mathcal{O}(n)$ potential target levels greater than $\tau$, for which each takes $\mathcal{O}(1)$ time. Since there are $\mathcal{O}(nk)$ subproblems, the running time of the algorithm is $\mathcal{O}(n^2k +n^3) = \mathcal{O}(n^3)$. 
\end{proof}

\section{Missing Proofs of \Cref{sec:max_min}}\label{app:missing_max_min}

\subsection{Proof of~\Cref{prop:running-time-fairness-exact}}
\begin{numberedtheorem}{\ref{prop:running-time-fairness-exact}}
\Cref{recurrence-exact-fairness-objective} constructs the Pareto frontier for groups' social welfare using at most $k$ targets given $n$ agents in $g$ groups, and has a running time of $\mathcal{O}(n^{g+2}kg\Delta_{\max}^g)$, where $\Delta_{\max}$ is the maximum improvement capacity.
\end{numberedtheorem}

\begin{proof}
Proof of correctness follows by induction and it is along the same lines as proof of~\Cref{def:recurrence-dp-one-group}. 
Suppose that Pareto-frontiers constructed for all $T(\tau',\kappa')$ where $(\tau',\kappa') < (\tau,\kappa)$ is correct. Here ``$<$'' means 
$(\tau',\kappa')$ is computed before $(\tau,\kappa)$ which is when $\kappa'<\kappa$ and $\tau'\geq \tau$.  First, if either $\tau= \tau_{\text{max}}$ or $\kappa= 0$, the induction hypothesis holds since $T(\tau_{\max},\kappa)=\emptyset$ for all $1\leq \kappa\leq k$, and $T(\tau,0) = \emptyset$, for all $\tau\in\T_p$. The inductive step holds since the algorithm considers all the possible placement of the leftmost target $\tau'$. Since $T(\tau',\kappa-1)$ is computed correctly by the induction hypothesis and all the possible placements of the leftmost target are considered, the Pareto-frontier constructed at $T(\tau,\kappa)$ is correct.

Now we proceed to bounding the time complexity. Initially, in a pre-computation stage, for each pair of targets $\tau, \tau'\in \mathcal{T}_p$,  $\sum_{\tau\leq p_i <\tau'\text{ s.t. } \tau'-p_i \leq\Delta_{\ell}} \mathbbm{1}\big\{i\in G_{\ell}\big\}(\tau'-p_i)$ is pre-computed for all groups and is stored in a tuple of size $g$. This stage can be done in $\mathcal{O}(n^3)$.
Each set $T(\tau,\kappa)$ has size at most $(n\Delta_{\max}+1)^g$, since each individual can move for one of the values $\{0,\cdots,\Delta_{\max}\}$ and therefore, the total improvement in each group is 
one of the values $\{0,\cdots,n\Delta_{\max}\}$.
At each step of the recurrence, given the information stored in the pre-computation stage, the summation can be computed in $\mathcal{O}(g)$. 
When computing a subproblem $T(\tau,\kappa)$, the recurrence searches over $\mathcal{O}(n)$ targets $\tau'\in \T_p$, and at most $(n\Delta_{\max}+1)^g$ tuples of group improvement in $T(\tau',\kappa-1)$. As a result, solving each subproblem takes $\mathcal{O}(ng(n\Delta_{\max})^g)$.
The total number of subproblems that need to get solved is $\mathcal{O}(nk)$.
Therefore, the total running time of the algorithm is $\mathcal{O}(n^{g+2}kg\Delta_{\max}^g+n^3)$ = $\mathcal{O}(n^{g+2}kg\Delta_{\max}^g)$.
\end{proof}

\subsection{Proof of~\Cref{cor:exact-fairness}}

\begin{numberedcorollary}{\ref{cor:exact-fairness}}
There is an efficient algorithm that finds a set of at most $k$ targets that maximizes minimum improvement across all groups, i.e., maximizing $\min_{1 \leq \ell \leq g} \sw_\ell$.
\end{numberedcorollary}
\begin{proof}
\Cref{recurrence-exact-fairness-objective} constructs the Pareto frontier for groups' social welfare. By iterating through all Pareto-optimal solutions, we can find the solution that maximizes the minimum improvement across all groups. There are at most $(n\Delta_{\max}+1)^g$ Pareto-optimal solutions. Finding the minimum improvement in each solution takes $\mathcal{O}(g)$. Therefore, in total, finding the solution that maximizes the minimum improvement across all groups takes $\mathcal{O}(g(n\Delta_{\max})^g)$.
\end{proof}

\section{An FPTAS for Maximizing Minimum Group Improvement}
\label{sec:appendix-FPTAS}

In this section, we present a Fully Polynomial Time Approximation Scheme (FPTAS) to maximize minimum improvement across all groups. Here, we assume that each group $\ell$ has its own improvement capacity $\Delta_{\ell}$.

\begin{algo}
\label{algo:Max-min}
The algorithm considers two separate cases of $k< g$, and $k\geq g$. For the $k\geq g$ case, the algorithm finds a set of $k$ targets that approximates the max-min objective within a factor of $1-\eps$ for any arbitrary value of $\eps>0$. For the $k<g$ case, it finds an optimal solution for the max-min objective.

For the $k\geq g$ case, there exists an FPTAS for the max-min objective as follows. First, run a dynamic program using the following recursive function to get a set of Pareto-optimal solutions. In this Pareto-frontier, we show the solution that maximizes minimum improvement across all groups, gives a ($1-\eps$)-approximation for the max-min objective. In the recurrence, $\mu_{\ell} = \varepsilon\Delta_{\ell}/(16kg^3)$ for $1\leq \ell\leq g$, and $\Delta_{\ell}$ is the improvement capacity of agents in group $\ell$.

\begin{gather*} 
\begin{align*}
&\mathcal{F}(\tau',k') = \Bigg\{\Bigg(\mu_{\ell}\floor*{\frac{I'_{\ell}+\Big(\sum_{\substack{\tau'\leq p_i <\tau\\\text{ s.t. } \tau-p_i \leq\Delta_{\ell}}} \mathbbm{1}\Big\{i\in G_{\ell}\Big\}(\tau-p_i)\Big)}{\mu_{\ell}}}\Bigg)_{\ell=1}^{g},  &\text{ s.t. } (I'_{\ell})_{\ell=1}^{g}\in \mathcal{F}(\tau,k'-1), \tau\in \mathcal{T}_p, \tau\geq \tau' \Bigg\}
\end{align*}
\end{gather*}

Intuitively, $\mathcal{F}(\tau',k')$ stores the rounded down values of the feasible tuples of group improvements when all agents on or to the right of $\tau'$ are available and $k'$ targets are used. The corresponding set of targets used to construct the improvement tuples in $\mathcal{F}(\tau',k')$ is kept in a hash table $\mathcal{S}(\tau',k')$, whose keys are the improvement tuples in $\mathcal{F}(\tau',k')$. The dynamic program ends after computing $\mathcal{F}(\tau_{\min},k)$ and $\mathcal{S}(\tau_{\min},k)$. At the end, we output the set of targets in $\mathcal{S}(\tau_{\min},k)$ 
corresponding to the improvement tuple that maximizes the improvement of the worst-off group.
~\Cref{thm:approx-guarantee} shows that this algorithm gives a ($1-\eps$)-approximation for the max-min objective when $k\geq g$.

When $k<g$, for each subset of $\T_p$ of size at most $k$ that is corresponding to a placement of targets, we store its corresponding improvement tuple. Next, we iterate through all improvement tuples and output the one that maximizes minimum improvement.
\end{algo}

\begin{lemma}
\label{thm:approx-guarantee} 
\Cref{algo:Max-min} gives a ($1-\eps$)-approximation for the max-min objective when $k\geq g$.
\end{lemma}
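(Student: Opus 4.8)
The plan is to show that the rounding performed by \Cref{algo:Max-min} introduces only a small multiplicative loss in the max-min objective, by carefully controlling the total error accumulated across the at most $k$ targets and $g$ groups. First I would observe that, by the correctness argument of \Cref{recurrence-exact-fairness-objective} (which is structurally identical except for the $\mu_\ell \lfloor \cdot \rfloor$ rounding), the dynamic program based on $\mathcal{F}$ correctly enumerates a Pareto frontier of \emph{rounded} improvement tuples, and that each exact feasible tuple has a corresponding rounded tuple in the frontier. The key quantitative claim is that for any placement of targets, the rounded improvement of group $\ell$ differs from the true improvement by at most a bounded additive amount.

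The central step is error tracking. At each of the (at most) $k$ levels of the recursion, the floor operation $\mu_\ell \lfloor \cdot/\mu_\ell \rfloor$ discards at most $\mu_\ell$ from group $\ell$'s running improvement total. Since the recursion has depth $k$, the total additive error for group $\ell$ is at most $k\mu_\ell = \eps \Delta_\ell/(16 g^3)$. I would then argue that the optimal max-min value is not too small relative to $\Delta_\ell$: invoking \Cref{thm:approx} (or the structural guarantees of \Cref{sec:approx_optimality}), when $k \geq g$ the max-min optimum $\opt$ satisfies $\opt \geq \frac{1}{16 g^3}\opt_\ell^k$ for the worst-off group, and $\opt_\ell^k \geq \Delta_\ell$ whenever group $\ell$ has any agent that can improve (a single well-placed target yields at least $\Delta_\ell$ improvement from one agent). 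Combining these, the additive error $k\mu_\ell$ is at most $\eps \opt$, so the rounded max-min value underestimates the true value by at most an $\eps$ fraction, giving the $(1-\eps)$ guarantee. I would be careful to check that the target set recovered from $\mathcal{S}(\tau_{\min},k)$ actually achieves true improvement at least the rounded value minus the accumulated error, so that the output solution — not merely the stored rounded tuple — meets the bound.

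I expect the main obstacle to be establishing the lower bound on $\opt$ in terms of $\Delta_\ell$ cleanly, i.e.\ relating the per-group additive rounding error $k\mu_\ell = \eps\Delta_\ell/(16g^3)$ to an $\eps$ fraction of the \emph{global} max-min objective rather than to group $\ell$'s isolated optimum. The subtlety is that the worst-off group in the joint solution may be forced below its isolated optimum by interference (as \Cref{ex:max_min_suboptimal} illustrates), so I must ensure the chosen value of $\mu_\ell$, which scales with $\Delta_\ell$ and divides by $g^3$, precisely absorbs the $1/(16g^3)$ factor appearing in \Cref{thm:approx} so that the bound holds simultaneously for every group. A secondary technical point is confirming that the floor-based discretization does not cause a genuinely feasible near-optimal tuple to be dominated and pruned from the frontier; I would argue that since rounding is monotone and applied uniformly, the rounded image of the true optimal placement survives in $\mathcal{F}(\tau_{\min},k)$, and selecting the worst-off-maximizing tuple from the frontier therefore recovers a solution within $(1-\eps)$ of $\opt$.
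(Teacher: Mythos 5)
Your proposal is correct and follows essentially the same route as the paper's proof: an induction over the $k$ recursion levels showing the floor operation loses at most $\mu_\ell$ per level (hence $k\mu_\ell = \eps\Delta_\ell/(16g^3)$ in total for group $\ell$), combined with the lower bound $\opt \geq \Delta_\ell/(16g^3)$ obtained from \Cref{thm:approx} together with the observation that a single well-placed target already gives group $\ell$ improvement $\Delta_\ell$. The ``main obstacle'' you flag—relating the per-group rounding error to the global max-min optimum—is resolved in the paper exactly as you propose, by letting $\mu_\ell$ scale with $\Delta_\ell/(16kg^3)$ so that the $1/(16g^3)$ factor from \Cref{thm:approx} is absorbed.
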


\begin{proof}
The proof is by induction. Consider an improvement tuple $(I_1,\cdots,I_g)$ corresponding to an arbitrary set of $k-1$ targets, 
and let $(I'_1,\cdots,I'_g)$ denote the rounded down values where $I'_{\ell} = \mu_{\ell}\floor{\frac{I_{\ell}}{\mu_{\ell}}}$ for all $1\leq \ell \leq g$. Suppose that for all $1\leq \ell \leq g$,
$I'_{\ell} \geq I_{\ell}-(k-1)\mu_{\ell}$.

Now consider an improvement tuple $(J_1,\cdots,J_g)$ corresponding to an
arbitrary set of $k$ targets. For each $1\leq \ell \leq g$, let $J'_{\ell} = \mu_{\ell}\floor{\frac{J_{\ell}}{\mu_{\ell}}}$. We show that for each $1\leq \ell\leq g$, $J'_{\ell} \geq J_{\ell} - k\mu_{\ell}$.
For all $1\leq \ell \leq g$, let $J_{\ell} = L_{\ell} + I_{\ell}$, where $L_{\ell}$ is the improvement of group $\ell$ that the leftmost target provides, and $I_{\ell}$ captures the true improvement of group $\ell$ that the remaining $k-1$ targets provide. Let $I'_{\ell} = \mu_{\ell}\floor{\frac{I_{\ell}}{\mu_{\ell}}}$. 
Then $J'_{\ell} = \mu_{\ell}\floor{\frac{L_{\ell}+I'_{\ell}}{\mu_{\ell}}}$ implying that $J'_{\ell}\geq L_{\ell} + I'_{\ell}-\mu_{\ell}$. By the induction hypothesis, $I'_{\ell} \geq I_{\ell}-(k-1)\mu_{\ell}$. Therefore,

\[
J'_{\ell} \geq L_{\ell} + I'_{\ell}-\mu_{\ell}
\geq L_{\ell}+I_{\ell}-(k-1)\mu_{\ell}-\mu_{\ell}
= L_{\ell}+I_{\ell}-k\mu_{\ell}
\]
Therefore, for each set of $k$ targets, the rounded improvement of each group ${\ell}$ stored in the table is within an additive factor of $k\mu_{\ell} = \eps\Delta_{\ell}/(16g^3)$ of its true improvement. We argue that in the solution returned by the algorithm, improvement of each group is at least $(1-\eps)OPT$. First, when $k\geq 1$, each group can improve for at least $\Delta_{\ell}$ by setting a target within a distance of $\Delta_{\ell}$ from its rightmost agent. Now, using~\Cref{thm:approx} when $k\geq g$, there exists a solution that is simultaneously $1/(16g^3)$-optimal for all groups. Therefore, the optimum value of the max-min objective is at least $OPT\geq \Delta_{\ell}/(16g^3)$ for all $1\leq \ell \leq g$. Therefore, for each solution consisting of $k$ targets, the rounded improvement of each group is within an additive factor of $\eps OPT$ of its true improvement.
As a result, the minimum group improvement in the returned solution is at least $(1-\eps)OPT$. \end{proof}

In the following, we bound the approximation factor of our algorithm in both cases of $k\geq g$ and $k<g$.

\begin{corollary}
\Cref{algo:Max-min} described above gives a ($1-\varepsilon$)-approximation for the max-min objective.
\end{corollary}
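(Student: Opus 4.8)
The plan is to dispatch on the two regimes that \Cref{algo:Max-min} itself distinguishes. The $k \geq g$ case is already settled by \Cref{thm:approx-guarantee}, which shows the rounded dynamic program returns a solution whose minimum group improvement is at least $(1-\eps)\opt$; so for that regime the corollary follows immediately and no further work is needed. The only remaining obligation is to handle $k < g$.

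For $k < g$, the algorithm abandons the rounding-based dynamic program and instead enumerates every subset of $\T_p$ of size at most $k$, computes the induced $g$-tuple of group improvements for each candidate placement, and returns the one maximizing the minimum coordinate. I would argue correctness in two steps. First, exactness: by \Cref{obs:potential_targets} an optimal placement of at most $k$ targets may be taken inside $\T_p$, so the best subset examined attains the true max-min optimum exactly; an exact optimum is trivially a $(1-\eps)$-approximation (and remains so even when the optimum is $0$ because fewer targets than groups are available). Second, efficiency: since $|\T_p| \leq 2n$ and $k \leq g-1$, the number of enumerated subsets is $\sum_{i=0}^{k}\binom{2n}{i} = \mathcal{O}(n^{g-1})$, and evaluating each tuple (each agent selecting its closest reachable target) costs polynomial time, so the case runs in polynomial time for constant $g$.

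The point I would be most careful to articulate is \emph{why} the two regimes require genuinely different techniques rather than one unified FPTAS. The rounding argument of \Cref{thm:approx-guarantee} converts the per-group additive error $k\mu_\ell = \eps\Delta_\ell/(16g^3)$ into a \emph{multiplicative} $(1-\eps)$ guarantee only because the max-min optimum is bounded below by $\Delta_\ell/(16g^3)$ for every group, and that lower bound is supplied by the simultaneous approximate optimality result of \Cref{thm:approx}, which itself requires $k \geq g$. When $k < g$ this lower bound is unavailable, so the additive rounding error need not be a small fraction of $\opt$ and the rounding-based FPTAS could fail; exact enumeration sidesteps the issue entirely, at the cost of a running time that is polynomial precisely because $k$ is then bounded by the constant $g-1$. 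Beyond stating this split and combining the two cases, the argument is routine, so I do not expect any real obstacle.
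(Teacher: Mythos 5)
Your proposal is correct and follows essentially the same two-case structure as the paper's own proof: invoke \Cref{thm:approx-guarantee} for $k \geq g$, and observe that the exhaustive enumeration over subsets of $\T_p$ is exactly optimal for $k < g$. Your added remarks on why enumeration is exact (via \Cref{obs:potential_targets}) and why the rounding argument needs the $k \geq g$ lower bound from \Cref{thm:approx} are sound elaborations of what the paper leaves implicit.
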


\begin{proof}
For the case of $k\geq g$, by~\Cref{thm:approx-guarantee} the algorithm outputs a ($1-\varepsilon$)-approximation. For $k<g$, it outputs an optimum solution. Therefore, in total, it gives a ($1-\varepsilon$)-approximation for the max-min objective.
\end{proof}

In the following, we bound the time-complexity of the algorithm.

\begin{theorem}
\label{thm:running-time}
\Cref{algo:Max-min} has a running time of
$\mathcal{O}(n^{g+2}k^{g+1}g^{3g+1}/\eps^g)$.
\end{theorem}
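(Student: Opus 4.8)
The plan is to mirror the time-complexity analysis of \Cref{prop:running-time-fairness-exact}, replacing the exact improvement tuples (whose coordinates range over $\{0,\dots,n\Delta_{\max}\}$) by the \emph{rounded} tuples produced by the recurrence of \Cref{algo:Max-min}, and to verify that the discretization parameter $\mu_\ell = \eps\Delta_\ell/(16kg^3)$ shrinks each coordinate's range enough to yield the claimed bound. I would treat the main case $k \geq g$ first, then check that the $k < g$ branch is dominated.

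For $k \geq g$, I would first count subproblems. Since $\T_p = \bigcup_{i=1}^n\{p_i, p_i+\Delta_i\}$ has $\mathcal{O}(n)$ elements and $1 \leq k' \leq k$, there are $\mathcal{O}(nk)$ pairs $(\tau', k')$. As in \Cref{prop:running-time-fairness-exact}, a pre-computation stage computes, for every pair of potential targets and every group, the single-target contribution $\sum_{\tau'\leq p_i<\tau,\ \tau-p_i\leq\Delta_\ell}\mathbbm{1}\{i\in G_\ell\}(\tau-p_i)$; this takes $\mathcal{O}(n^3)$ time and lets the recurrence add the leftmost target's contribution to a given tuple in $\mathcal{O}(g)$ time.

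The crux is bounding $|\mathcal{F}(\tau',k')|$, the number of distinct rounded tuples stored at a subproblem. For each group $\ell$, the true improvement lies in $[0, n\Delta_\ell]$, so every stored coordinate is a multiple of $\mu_\ell$ within this range; hence there are at most $n\Delta_\ell/\mu_\ell + 1 = 16nkg^3/\eps + 1 = \mathcal{O}(nkg^3/\eps)$ values per coordinate, and at most $\mathcal{O}\bigl((nkg^3/\eps)^g\bigr) = \mathcal{O}(n^g k^g g^{3g}/\eps^g)$ tuples overall. Given this, solving one subproblem $\mathcal{F}(\tau',k')$ via the recurrence of \Cref{algo:Max-min} searches over $\mathcal{O}(n)$ choices of the adjacent target and, for each, processes all $\mathcal{O}(n^g k^g g^{3g}/\eps^g)$ tuples of the relevant subproblem at $\mathcal{O}(g)$ cost apiece, i.e., $\mathcal{O}(n^{g+1}k^g g^{3g+1}/\eps^g)$ per subproblem. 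Multiplying by the $\mathcal{O}(nk)$ subproblems gives $\mathcal{O}(n^{g+2}k^{g+1}g^{3g+1}/\eps^g)$, which already dominates the $\mathcal{O}(n^3)$ pre-computation and the final max-min extraction (scanning the $\mathcal{O}((nkg^3/\eps)^g)$ tuples of $\mathcal{F}(\tau_{\min},k)$ at $\mathcal{O}(g)$ cost each).

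Finally I would dispatch the $k<g$ branch: it enumerates the $\sum_{i\le k}\binom{2n}{i} = \mathcal{O}(n^k)$ subsets of $\T_p$ of size at most $k$, computes an improvement tuple for each in $\mathcal{O}(ng)$ time, and scans for the max-min value; since $k<g$ this costs $\mathcal{O}(n^{k+1}g) = \mathcal{O}(n^{g}g)$, dominated by the main term. Combining the two cases yields the stated $\mathcal{O}(n^{g+2}k^{g+1}g^{3g+1}/\eps^g)$ bound. I expect the one genuinely delicate step to be the coordinate-range bound: I must argue that the floor in the recurrence never lets a coordinate exceed $n\Delta_\ell$ and that the $\Delta_\ell$ in the numerator of $\mu_\ell$ cancels the $\Delta_\ell$ in the range, so that the factor $\Delta_{\max}^g$ of \Cref{prop:running-time-fairness-exact} is replaced by the $\Delta$-free factor $(kg^3/\eps)^g$.
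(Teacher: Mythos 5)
Your proposal is correct and follows essentially the same route as the paper's proof: the same $\mathcal{O}(n^3)$ pre-computation, the same $\mathcal{O}(nk)$ subproblem count, the same bound of $(16nkg^3/\eps)^g$ on the number of stored rounded tuples (via the range $[0,n\Delta_\ell]$ divided by $\mu_\ell$, with the $\Delta_\ell$ cancelling), and the same per-subproblem cost $\mathcal{O}(ng(nkg^3/\eps)^g)$, with the $k<g$ branch dispatched as dominated. The only cosmetic difference is your slightly looser $\mathcal{O}(ng)$ per-placement accounting in the $k<g$ case versus the paper's $\mathcal{O}(kg)$, which does not affect the final bound.
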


\begin{proof}
Initially, in a pre-computation stage, for each pair of targets $\tau, \tau'\in \mathcal{T}_p$,  $\sum_{\tau'\leq p_i <\tau\text{ s.t. } \tau-p_i \leq\Delta_{\ell}} \mathbbm{1}\Big\{i\in G_{\ell}\Big\}(\tau-p_i)$ is pre-computed for all groups and is stored in a tuple of size $g$.
This stage can be done in $\mathcal{O}(n^3)$.
Now, first consider the case where $k\geq g$. We show the dynamic programming algorithm using recurrence $\mathcal{F}(\tau',k')$ has a running time of 
$\mathcal{O}(n^{g+2}k^{g+1}g^{3g+1}/\eps^g)$.
 Each set $\mathcal{F}(\tau',k')$ and $\mathcal{S}(\tau',k')$ has size at most $\prod_{\ell=1}^{g}(n\Delta_{\ell}/\mu_{\ell})^{g} = (16nkg^3/\eps)^{g}$.
At each step of the recurrence, given the information stored in the pre-computation stage, the summation can be computed in $\mathcal{O}(g)$ .
When computing $\mathcal{F}(\tau',k')$, the recurrence searches over $\mathcal{O}(n)$ targets $\tau\in \mathcal{T}_p$, and at most  $\prod_{\ell=1}^{g}(n\Delta_{\ell}/\mu_{\ell})^{g} = (16nkg^3/\eps)^{g}$ tuples of group improvement in $\mathcal{F}(\tau,k'-1)$. As a result, solving each subproblem takes $\mathcal{O}(ng(nkg^3/\eps)^{g})$. The total number of subproblems that need to get solved is $\mathcal{O}(nk)$. Therefore, the total running time of computing $\mathcal{F}(\tau_{\min},k)$ is 
$\mathcal{O}(n^{g+2}k^{g+1}g^{3g+1}/\eps^g)$.

Next, consider the case where $k<g$. The algorithm considers $\mathcal{O}(n^g)$ placements of targets. Given the pre-computation stage,
computing the improvement tuple corresponding to each placement of targets   takes $\mathcal{O}(kg)$. As a result, this case takes $\mathcal{O}(kgn^g)$. 

Therefore, the total running time of algorithm is 
$\mathcal{O}(n^3+n^{g+2}k^{g+1}g^{3g+1}/\eps^g+kgn^g) = \mathcal{O}(n^{g+2}k^{g+1}g^{3g+1}/\eps^g)$.
\end{proof}
\section{Missing Proofs of \Cref{sec:approx_optimality}}\label{app:approximation}

\begin{numberedlemma}{\ref{lm:make_distant}}
Consider solution $\T: \tau_1< \tau_2< \ldots$ with total improvement $I$ such that for all $j$,  $\tau_{j+2}-\tau_j \geq \Delta$. Consider the procedure in \Cref{def:distant_procedure}. This procedure results in a solution $\T': \tau'_1< \tau'_2< \ldots$ where $\forall j \; \tau'_{j+1}-\tau'_j \geq 2\Delta$, has total improvement at least $ I/4$, and $|\T'| \leq \lceil |\T|/4 \rceil$. Particularly, for $|\T|\leq \lceil k/g \rceil$ where $k \geq g$, the number of final targets, $|\T'|$, is at most $\lfloor k/g \rfloor$.
\end{numberedlemma}

\begin{proof}[Proof of \Cref{lm:make_distant}]
Since the best out of $4$ parts have been selected, the total improvement at the end of the procedure is at least $1/4$ fraction of $I$.
In addition, in the final set, every pair of consecutive targets are indexed $\tau_{j}$ and $\tau_{j+4}$. Therefore, since originally for all $j$, $\tau_{j+2}-\tau_j \geq \Delta$, we have $\tau_{j+4}-\tau_j \geq 2\Delta$. Finally, since in each set of $\tau_j,\ldots, \tau_{j+4}$ exactly one target is selected, the final number of targets is at most $\lceil |\T|/4 \rceil$.
\end{proof}

\begin{numberedlemma}{\ref{lm:step_three}}
At the end of step $3$ in \Cref{alg:approx}, (i) the distance between every two targets in $\T_\ell$ is at least $\Delta$; (ii) each target $\tau \in \T_\ell$ is optimal, i.e., maximizes total improvement for agents in $G_\ell \cap [\tau-\Delta, \tau)$; and (iii) the total amount of improvement of $G_\ell$ using solution $\T_\ell$ does not decrease compared to the previous step.
\end{numberedlemma}

\begin{proof}[Proof of \Cref{lm:step_three}]
Let $\tau$ be a target at the beginning of step $3$ and $\tau'$ be its replacement at the end of this step. 

We first prove statement (i). First, we argue for agents in $[\tau-\Delta, \tau)$, the optimal target $\tau'$  belongs to $[\tau, \tau+\Delta]$. Intuitively, the reason is that all these agents afford to improve to $\tau$; therefore, a target smaller than $\tau$ is suboptimal. Also, none of the agents affords to improve beyond $\tau+\Delta$. More formally, if $\tau'<\tau$, agents in $[\tau-\Delta, \tau')$ improve less compared to a target at $\tau$ and agents in $[\tau',\tau)$ do not improve. On the other hand, if $\tau'>\tau+\Delta$, none of the agents can reach $\tau'$ and the total improvement for these agents will be $0$. Therefore, at the end of this step, every target $\tau$ is replaced with $\tau' \in [\tau, \tau+\Delta]$. Now, by \Cref{obs:delta_apart} and \Cref{lm:make_distant}, the distance between consecutive targets at the end of step $2$ is at least $2\Delta$. Therefore, after the modification explained (shifting each target to the right by less than $\Delta$) this distance decreases by at most $\Delta$ and becomes at least $\Delta$. 

Now, we move on to statement (ii). We need to argue if $\tau'$ is optimal for agents in $[\tau-\Delta, \tau)$, it is also optimal for agents in $[\tau'-\Delta, \tau')$. By \Cref{lm:make_distant}, at the beginning of step $3$, there are no targets in $(\tau, \tau+2\Delta)$; more specifically, there are no targets for agents in $[\tau,\tau+\Delta)$ and these agents get eliminated in this step. Therefore, since $\tau'$ belongs to $[\tau, \tau+\Delta]$, as shown in the proof of statement (i), we only need to argue that if $\tau'$ is optimal for $[\tau-\Delta, \tau)$, it is also optimal for $[\tau'-\Delta, \tau)$. Suppose this was not the case, and there was another target $\tau''$ which was optimal for this set. Since the agents in $[\tau'-\Delta, \tau)$ are the only agents with positive amount of improvement for target $\tau'$, replacing $\tau'$ with $\tau''$ would result in higher improvement for the whole set of agents in $[\tau-\Delta, \tau)$ which is in contradiction with definition of $\tau'$.  

Finally, we argue statement (iii). In step $3$, the agents not improving in step $2$ have been eliminated and the new targets only (weakly) increased the total improvement of the remaining agents. Therefore, the total amount of improvement does not decrease in this step.
\end{proof}

\begin{numberedlemma}{\ref{lm:step_four}}
Consider $\T$ as the union of all solutions at the end of step $3$. For all $\tau \in \T$, consider the interval $[\tau-\Delta, \tau)$ which consists of agents that improve to target $\tau$ if it were the only target available. At the end of step $4$, (i) there will be a target in $[\tau-\Delta+\Delta/g, \tau]$, and (ii) there will be no targets in $(\tau-\Delta, \tau-\Delta+\Delta/g)$.
\end{numberedlemma}
\begin{proof}[Proof of \Cref{lm:step_four}] Statement (i) is equivalent to (i') for any $s \in S$, there will be a target in $[s+\Delta/g, s+\Delta)$; and statement (ii) is equivalent to (ii') for any $s \in S$, there will be no targets in $(s, s+ \Delta/g)$. We prove (i') and (ii').

We first show the size of each part is at most $g$; i.e. $\forall i, |S_i| \geq g$. The proof is by contradiction. Suppose there exists $|S_i| \geq g+1$. Therefore, there exist $s_j < s_{j'} \in S_i$ and group index $\ell$, such that $s_j+\Delta, s_{j'}+\Delta \in \T_\ell$, and all $s$ satisfying $s_j< s< s_{j'} \in S_i$ corresponding to targets in distinct groups other than $\ell$. Therefore, there are at most $g-1$ such $s$. Hence, $s_{j'}-s_j < g\times \Delta/g = \Delta$, implying there are two targets in $\T_\ell$ at distance strictly less than $\Delta$ which is in contradiction with \Cref{lm:step_three}.

Now, we prove statement (i''). In step $4$, the final target corresponding to part $S_i : s_u\leq s_{u+1}\leq \ldots \leq s_v$ is defined as $\tau^*_i = \min\{\tau_v, s_{v+1}\}$. By definition, $\tau^*_i \leq s_{v+1}$; therefore, it is (weakly) to the left of any $s_j$ for $j \geq v+1$. Also,  using $|S_i| \leq g$, $s_v < s_u+ (g-1)\Delta/g$, which implies $\tau_u - s_v > \Delta/g$,
and since by definition, $s_{v+1}-s_v \geq \Delta/g$, both $s_{v+1}$ and $\tau_u$ are at least at distance $\Delta/g$ to the right of $s_v$ and any $s_j$ such that $j \leq v$. This proves statement (i'').  

Finally, we prove (i'). In the proof of (ii'), we showed that $\tau^*_i \geq s_v + \Delta/g$ which implies $\tau^*_i \geq s + \Delta/g, \; \forall s \in S_i$. Therefore, it suffices to show $\tau^*_i \leq s_u + \Delta$, which then implies $\tau^*_i \leq s + \Delta, \; \forall s \in S_i$. The definition of $\tau^*_i$ directly implies $\tau^*_i \leq s_u + \Delta$.
\end{proof}

\section{Distance between consecutive target levels}

\Cref{obs:delta_apart} shows it is without loss of optimality to assume the distance between every other targets is at least $\Delta$ in the common improvement capacity model. The following example investigates this property for \emph{consecutive} targets, and shows an instance where the distance between two consecutive targets is arbitrarily small compared to $\Delta$ in the optimal solution.

\begin{example}\label{ex:consecutive_less_than_delta}
Suppose $\Delta=1$ and there is no limit on the number of targets.  Suppose there is an agent at position $0$, an agent at position $1$, and $m$ agents at position $1+1/m$. The optimal solution is $\T = \{\tau_1=1, \tau_2=1+1/m, \tau_3=2+1/m\}$. As $m \rightarrow \infty$, the distance between $\tau_1$ and $\tau_2$ approaches $0$.
\end{example}

\end{document}